\documentclass{irmaart}
\input{ha.sty}

\usepackage[latin1]{inputenc}
\usepackage[T1]{fontenc}

\usepackage{amsmath}
\usepackage{multirow}
\usepackage{graphics}
\usepackage[dvips]{epsfig}
\usepackage{algorithm,algorithmic}
\usepackage{subfigure}

\usepackage{float}
\floatstyle{plain}
\newfloat{grammar}{htbp}{gra}[section]
\floatname{grammar}{\bf Grammar}

\def \Que{{\mathbb Q}}

\DeclareMathOperator{\ord}{ord}

\def\bull{\bullet} 
\def\orr{{\ | \ }}

\def\rpn{{\rm rpn}}
\def\alphab{{\rm alph}}

\def\Zee{{\mathbb Z}}
\def\derives{\Longrightarrow}
\def\derivestar{\Longrightarrow^*}



\renewcommand{\epsilon}{\varepsilon}


\begin{document}

\title{Enumerating regular expressions and their languages}

\author{Hermann Gruber$^1$, Jonathan Lee$^2$, Jeffrey Shallit$^3$}

\markboth{H.~Gruber, J.~Lee, J.~Shallit}{Enumerating regular expressions}

\address{
$^{1}$ Institut f\"ur Informatik, Justus-Liebig-Universit\"at Giessen\\
Arndtstrasse 2\\
D-35392 Giessen, Germany\\[2mm]
$^{2}$ Department of Mathematics, Stanford University\\
Building 380, Sloan Hall\\
Stanford, CA 94305, United States of America\\[2mm]
$^{3}$ School of Computer Science, University of Waterloo\\
Waterloo, ON N2L 3G1, Canada\\
email:\,\url{hermann.gruber@informatik.uni-giessen.de},\\ 
\url{jlee@math.stanford.edu}, \\
\url{shallit@cs.uwaterloo.ca}
}

\maketitle\label{chapterERL} 

\begin{classification}
  68Q45
\end{classification}

\begin{keywords}
  Finite automata, regular expressions, combinatorial enumeration.
\end{keywords}

\localtableofcontents

%
%

\section{Introduction and overview}

Regular expressions have been studied for almost fifty years, yet
many interesting and challenging problems about them remain unsolved.
By a regular expression, we mean a string over the alphabet
$ \Sigma \ \cup \{\, +, *,\left(\right.,\left.\right), \epsilon, \emptyset \,\}$
that represents a regular language.
For example, ${\tt (0+10)*(1+ \epsilon)}$ represents the
language of all strings over $\lbrace \texttt{0,1} \rbrace$ that
do not contain two consecutive \texttt{1}'s.

      We would like to enumerate both (i) valid regular expressions and
(ii) the
distinct languages they represent. Observe that these are two
different enumeration tasks: on the one hand, every regular expression
represents exactly one regular language. On the other hand, 
simple examples, 
such as the expressions $\tt (a+b)*$ and $\tt (b*a*)*$, show that 
there is no one-to-one correspondence 
between regular languages and regular expressions. 

We are in a similar situation if we use descriptors other than
regular expressions, such as deterministic or nondeterministic 
finite automata.    Although enumeration of automata has a long
history, until recently little 
attention was paid to enumerating the distinct languages accepted.   Instead
authors concentrated on enumerating the automata themselves according
to various criteria (e.g., acyclic, nonisomorphic, strongly
connected, initially connected, ...).  

Here is a brief survey of known results on automata.  Vyssotsky
\cite{Vyssotsky:1959} raised the question of enumerating strongly
connected finite automata in an obscure technical report (but we have
not been able to obtain a copy).  Harary \cite{Harary:1959} enumerated
the number of ``functional digraphs'' (which are essentially unary
deterministic automata with no distinguished initial or final states)
according to their cycle structure; also see Read \cite{Read:1961} and
\cite{Livshits:1964}.
Harary also mentioned the
problem of enumerating deterministic
finite automata over a binary alphabet as an open problem in a
1960 survey of open problems in enumeration \cite[pp.\ 75,87]{Harary:1960},
and later in a similar 1964 survey \cite{Harary:1964}.
Ginsburg \cite[p.\ 18]{Ginsburg:1962} asked for the number of nonisomorphic
automata with output on $n$ states with given input and output alphabet size.

Harrison \cite{Harrison:1964,Harrison:1965} developed exact formulas
for the number of automata with specified size of the input alphabet,
output alphabet, and number of states.  
Similar results were found
by Korshunov \cite{Korshunov:1966}.  However, in their model, the
automata do not have a distinguished initial state or set of final
states.  Using the same model, Radke \cite{Radke:1965} enumerated the
number of strongly connected automata, but his solution was very
complicated and not particularly useful.  
Harary and Palmer \cite{Harary&Palmer:1967} found very complicated
formulas in the same model, but including an initial state and any
number of final states.

Harrison \cite{Harrison:1964,Harrison:1965} gave asymptotic estimates
for the number of automata in his model, but his formulas contained
some errors that were later corrected by Korshunov \cite{Korshunov:1967}.
For example, the number of nonisomorphic unary automata with $n$ states
(and no distinguished initial or final states) is asymptotically $c
(\pi n)^{-{1 \over 2}} \tau^{-n}$ where $c \doteq 0.80$ and $\tau
\doteq 0.34$.

Much work on enumeration of automata was done in the former Soviet 
Union.  For example,
Liskovets \cite{Liskovets:1969} studied the number of initially connected
automata and gave both a recurrence formula and an asymptotic formula for
them; also see Robinson \cite{Robinson:1985}.
Korshunov \cite{Korshunov:1974} counted the number of minimal
automata, and \cite{Korshunov:1975} gave asymptotic estimates for the
number of initially connected automata.  The 78-page survey by Korshunov
\cite{Korshunov:1978}, which unfortunately seems to never have been
translated into English, gives these and many other results.
More recently, Bassino and Nicaud \cite{Bassino&Nicaud:2007} found that
the number of nonisomorphic initially connected deterministic automata
with $n$ states is closely related to the Stirling numbers of the
second kind.

Shallit and Breitbart observed that
the number of finite automata can be applied to give bounds on the
``automaticity'' of languages and functions \cite{Shallit&Breitbart:1996}.
Pomerance, Robson, and Shallit \cite{Pomerance&Robson&Shallit:1997}
gave an upper bound on the number of distinct unary languages accepted
by unary NFA's with $n$ states.
Domaratzki, Kisman, and Shallit considered the number of distinct
languages accepted by finite automata with $n$ states
\cite{Domaratzki&Kisman&Shallit:2002}.  They showed, for example, that
the number of distinct languages accepted by unary finite
automata with $n$ states is $2^n (n - \alpha + O(n 2^{-n/2}))$, where
$\alpha \doteq 1.3827$.  (A weaker result was previously obtained
by Nicaud \cite{Nicaud:1999}.)  
Domaratzki \cite{Domaratzki:2004,Domaratzki:2004b} gave bounds on the
number of minimal DFA's accepting finite languages, which were improved by
Liskovets \cite{Liskovets:2006}.  Also see \cite{Callan:2008}.
For more details about enumeration of automata and languages, see
the survey of Domaratzki \cite{Domaratzki:2006}.

\section{On measuring the size of a regular expression}

Although, as we have seen, there has been much work for over 50 years on 
enumerating automata and the languages they represent, 
the analogous problem for regular expressions does not
seem to have been studied before 2004 \cite{Lee&Shallit:2005}.
We define $R_k(n)$ to be the number of distinct languages
specified by regular expressions of size $n$ over a $k$-letter alphabet.
The ``size'' of a regular expression can be defined in several
different ways~\cite{Ellul&Krawetz&Shallit&Wang:2005}:

\begin{itemize}

\item \textsl{Ordinary length}:  total number of symbols, including
parentheses, $\emptyset$, $\epsilon$, etc., counted with multiplicity.

	\begin{itemize}

	\item $\tt{(0+10)*(1+\epsilon)}$ has ordinary length 12

	\item Mentioned, for example,
	in \cite[p.\ 396]{Aho&Hopcroft&Ullman:1974},
	\cite{Ilie&Yu:2002}.

	\end{itemize}

\item \textsl{Reverse polish length}:  number of symbols in a reverse polish
equivalent, including a symbol $\bull$ for concatenation.
Equivalently, number of nodes in a syntax tree for the expression.

	\begin{itemize}

	\item ${\tt(0+10)*(1+\epsilon)}$ in reverse polish 
	would be ${\tt 010\bull+*\epsilon+\bull}$

	\item This has reverse polish length $10$

	\item Mentioned in \cite{Ziadi:1996}

	\end{itemize}

\item \textsl{Alphabetic width}:  number of symbols from $\Sigma$, counted
with multiplicity, not including
$\epsilon$, $\emptyset$, parentheses, operators

	\begin{itemize}

	\item ${\tt(0+10)*(1+\epsilon)}$ has alphabetic width $4$

	\item Mentioned in \cite{McNaughton&Yamada:1960,Ehrenfeucht&Zeiger:1976,Leiss:1980b}

	\end{itemize}
\end{itemize}

Each size measure seems to have its own advantages and disadvantages. 
The ordinary length appears to be the most direct way to measure the size 
of a regular expression.  Here we can employ the usual 
priority rules, borrowed from arithmetic,
for saving parentheses and omitting the $\bull$ 
operator.
This favors the catenation operator~$\bull$ over 
the union operator~$+$.
For instance, the expression ${\tt(a\bull b)+(c \bull d)}$ 
can be written more briefly as ${\tt ab+cd}$, which has ordinary 
length~$5$, whereas there 
is no corresponding way to simplify 
the expression ${\tt (a+b)(c + d)}$, which is twice as long.
The other two measures are more robust in this respect. 
In 
particular, reverse polish length is a faithful measure for 
the amount of memory required to store the parse tree of a 
regular expression, and alphabetic width is often used 
in proofs of upper and lower bounds, 
compare~\cite{Holzer&Kutrib:2010}. 
A drawback of alphabetic width 
is that it may be far from the ``real'' size of a given 
regular expression. As an example, the expression 
${\tt ((\epsilon + \emptyset)*\emptyset + \epsilon)*}$ 
has alphabetic width $0$. 

    However, these three measures
are all essentially identical, up to a constant multiplicative factor.
We say ``essentially'' because one can always artificially inflate the 
ordinary length of a regular expression by adding arbitrarily many 
multiplicative factors of $\epsilon$, additive factors of $\emptyset$, etc.
In order to avoid such trivialities, we define what it means for
a regular expression to be collapsible, as follows:
\begin{definition}
Let $E$ be a regular expression over the alphabet $\Sigma$, and
let $L(E)$ be the language specified by $E$.  We say
$E$ is {\sl collapsible} if any of the following conditions hold:
\begin{enumerate}
\item $E$ contains the symbol $\emptyset$ and $|E| > 1$;
\item $E$ contains a subexpression of the form $FG$ or $GF$ where $L(F) = 
\lbrace \epsilon \rbrace$;
\item $E$ contains a subexpression of the form $F+G$ or $G+F$ 
where $L(F) = \lbrace \epsilon \rbrace$ and $\epsilon \in L(G)$.
\end{enumerate}
Otherwise,
if none of the conditions hold, $E$ is said to be {\sl uncollapsible}.
\end{definition}

\begin{definition}
If $E$ is an uncollapsible regular expression such that
\begin{enumerate}
\item $E$ contains no superfluous parentheses; and
\item $E$ contains no subexpression of the form ${F^*}^*$.
\end{enumerate}
then we say $E$ is {\sl irreducible}.
\end{definition}

    Note that a minimal regular expression for $E$ is uncollapsible and 
irreducible, but the converse does not necessarily hold.
     In \cite{Ellul&Krawetz&Shallit&Wang:2005} the following
theorem is proved (cf.\ \cite{Ilie&Yu:2002}).

\begin{theorem}
Let $E$ be a regular expression over $\Sigma$.
Let $|E|$ denote its ordinary
length, let $|\rpn(E)|$ denote its reverse polish length, and let
$|\alphab(E)|$ denote the number of alphabetic symbols contained in $E$.
Then we have
\begin{enumerate}
\item[(a)] $|\alphab(E)|\leq |E|$;
\item[(b)] If $E$ is irreducible and $|\alphab(E)| \geq 1$,
then $|E| \leq 11 \cdot |\alphab(E)| -4$;
\item[(c)] $|\rpn(E)| \leq 2 \cdot |E|-1$;
\item[(d)] $|E| \leq 2 \cdot |\rpn(E)| - 1$;
\item[(e)] $|\alphab(E)| \leq {1 \over 2}(|\rpn(E)|+1)$;
\item[(f)] If $E$ is irreducible and $|\alphab(E)| \geq 1$, then
$|\rpn(E)| \leq 7 \cdot |\alphab(E)| -2$.
\end{enumerate}
\label{irred}
\end{theorem}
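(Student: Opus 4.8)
The plan is to prove each of the six inequalities by structural induction on the regular expression $E$, treating the easy counting bounds (a), (c), (d), (e) first and then the two substantive bounds (b) and (f) that exploit irreducibility.

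Let me think about each part.

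Part (a): $|\alphab(E)| \le |E|$. Every alphabetic symbol counted in $\alphab(E)$ is one of the symbols counted in $|E|$. Trivial, no induction needed.

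Part (c): $|\rpn(E)| \le 2|E| - 1$. Induction on structure. Base case: $E$ is a single symbol ($a \in \Sigma$, $\epsilon$, or $\emptyset$), so $|E| = 1$ and $|\rpn(E)| = 1 = 2\cdot 1 - 1$. Inductive step: for $E = F + G$ or $E = FG$ (ordinary length counts the operator symbol for $+$; for concatenation in ordinary length there may be no symbol, but reverse polish adds a $\bullet$), $|\rpn(E)| = |\rpn(F)| + |\rpn(G)| + 1 \le (2|F|-1) + (2|G|-1) + 1 = 2(|F|+|G|) - 1 \le 2|E| - 1$, since $|E| \ge |F| + |G|$. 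For $E = F^*$: $|\rpn(E)| = |\rpn(F)| + 1 \le 2|F| - 1 + 1 = 2|F| \le 2|E| - 1$ since $|E| = |F| + 1$ (or $|F| + 3$ if parentheses are needed). Parenthesized subexpressions only increase $|E|$, which helps.

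Part (d): $|E| \le 2|\rpn(E)| - 1$. This is the reverse direction and is where parentheses must be accounted for carefully — each binary operator in the syntax tree contributes at most two parentheses to the ordinary length, plus possibly one infix symbol, i.e. at most $3$ characters, against $1$ node in rpn; a star contributes at most $2$ parentheses plus the star, i.e. $3$ characters, against $1$ node. So by induction, writing $n = |\rpn(E)|$ for the number of syntax-tree nodes: a leaf gives $|E| = 1 = 2\cdot 1 - 1$; for $E = F + G$, $|E| \le |F| + |G| + 3 \le (2|\rpn(F)|-1) + (2|\rpn(G)|-1) + 3 = 2(|\rpn(F)| + |\rpn(G)|) + 1 = 2(|\rpn(E)| - 1) + 1 = 2|\rpn(E)| - 1$; the concatenation and star cases are analogous. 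One should double-check the constant is not off; if $3$ per internal node is too generous the bound might need the slightly weaker form, but the statement as given should come out.

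Part (e): $|\alphab(E)| \le \frac12(|\rpn(E)| + 1)$. In a syntax tree, the alphabetic symbols are exactly the leaves that lie in $\Sigma$ (leaves labelled $\epsilon$ or $\emptyset$ do not count). A binary tree with $\ell$ leaves has $\ell - 1$ internal nodes, so if all leaves are alphabetic, $|\rpn(E)| \ge 2|\alphab(E)| - 1$, giving the bound. Leaves labelled $\epsilon$, $\emptyset$ and unary $*$ nodes only add to $|\rpn(E)|$, so the inequality is preserved; a clean way is induction on structure with the invariant that $|\rpn(E)| \ge 2|\alphab(E)| - 1$ whenever $|\alphab(E)| \ge 1$, and $|\rpn(E)| \ge 1$ always.

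Now the two hard parts. Part (b): $|E| \le 11|\alphab(E)| - 4$ for irreducible $E$ with $|\alphab(E)| \ge 1$. The idea is induction on structure, but the point of irreducibility is that it forbids the structures that would let $|E|$ grow without $\alphab(E)$ growing. By the first definition, an irreducible expression with $|\alphab(E)| \ge 1$ contains no $\emptyset$ at all (since $|E| > 1$), contains no concatenation in which one factor has language $\{\epsilon\}$, no $+$ with one summand of language $\{\epsilon\}$ when the other contains $\epsilon$, no double stars, and no superfluous parentheses. The hard part will be the base of the induction and the bookkeeping on parentheses: a subexpression needs parentheses only when its root operator binds more loosely than its parent's operator, so I would set up a case analysis on the root of $E$, and in each case bound the number of parenthesis pairs by the number of internal nodes, then charge everything to the alphabetic leaves. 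One must verify that the smallest irreducible expressions — single letters (length $1$, giving $1 \le 7$), things like $a^*$ (length $2$), $(a+b)^*$ (length $7$), etc. — all satisfy the bound with the slack built into the constants $11$ and $-4$, and that the constant survives the inductive combination. This is the calculation I would expect to be delicate: getting the additive constant right so that $|E_1| + |E_2| + (\text{operator and parentheses}) \le 11(|\alphab(E_1)| + |\alphab(E_2)|) - 4$ follows from $|E_i| \le 11|\alphab(E_i)| - 4$ needs the $-4$ on the right to dominate two copies of $-4$ on the left plus the operator/parenthesis overhead, which forces a careful count of how many parentheses can appear and is exactly where the "no superfluous parentheses" clause earns its keep. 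The same comment applies to part (f), which is the analogue with reverse polish length in place of ordinary length and the constants $7$ and $-2$; since by (c) and (d) rpn length and ordinary length are linearly related, (f) could in principle be derived from (b), but the constants would be worse, so a direct structural induction mirroring the proof of (b) is the way to get the stated bound. I expect (b) to be the genuine obstacle, with (f) following the same template once (b) is done.
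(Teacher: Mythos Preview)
The paper does not prove this theorem; it is quoted from \cite{Ellul&Krawetz&Shallit&Wang:2005} (with a pointer to \cite{Ilie&Yu:2002}) and stated without argument. There is therefore no proof in the present paper against which to compare your proposal.

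For what it is worth, structural induction on $E$ is the standard route and is how the cited source proceeds. Your handling of (a), (c), and (e) is correct. Two comments on (d). First, the bound is false for expressions with superfluous parentheses --- $E=((a))$ has $|E|=5$ and $|\rpn(E)|=1$ --- so a ``no superfluous parentheses'' hypothesis is implicit, and you should make that explicit. Second, the naive induction you sketch loses the additive constant in the concatenation step when both children are sums: there $|E|=|F|+|G|+4$, and plugging in the inductive hypothesis yields only $|E|\le 2|\rpn(E)|$, not $2|\rpn(E)|-1$. One fixes this either with a strengthened invariant (a $+$-rooted subexpression always carries slack at least $2$, which pays for the parentheses when it becomes a child of $\cdot$ or $*$) or with a one-shot global count using the identity that the number of binary internal nodes in the syntax tree is one less than the number of leaves. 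Your plan for (b) and (f) correctly identifies that each clause of irreducibility must be invoked in a case analysis on the root operator, and that tracking the additive constants through the parenthesis bookkeeping is where the work lies; as you say, your sketch is a plan rather than a proof, but it is the right plan.
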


\section{A simple grammar for valid regular expressions}


As we have seen,
if we want to enumerate regular expressions by size,
we first have to agree upon a notion of expression 
size. But even then there still remains some ambiguity about the definition
of a valid regular 
expression.  For example, does the empty expression, that is, a string of length
zero, constitute a valid regular expression? How about
\texttt{()} or \texttt{a**}?
The first two, for example, generate errors in the software package 
Grail version 2.5 \cite{Raymond&Wood:1994}.  
      Surprisingly, very few textbooks, if any, define valid regular
expressions properly or formally.  For example, using the definition
given in Martin \cite[p.\ 86]{Martin:2003}, the expression $\tt 00$ is not
valid, since it is not fully parenthesized.  (To be fair, after the
definition it is implied that parentheses can be omitted in some cases,
but no formal definition of when this can be done is given.)
      Probably the best way to define valid regular expressions is
with a grammar.  We now present an unambiguous grammar for all
valid regular expressions:

\begin{eqnarray*}
S & \rightarrow & E_{+} \orr E_{\bull} \orr G \\
E_{+} & \rightarrow & E_{+} + F \orr F+F \\
F & \rightarrow & E_{\bull} \orr G \\
E_{\bull} & \rightarrow &  E_{\bull} \, G \orr GG \\
G & \rightarrow & E_{*} \orr C \orr P \\
C & \rightarrow & \emptyset \orr \epsilon \orr  a \ \ \ (a \in \Sigma) \\
E_{*} & \rightarrow & G* \\
P & \rightarrow & (S)
\end{eqnarray*}

This grammar can be proved unambiguous by induction on the size of the
regular expression generated.
The meaning of the variables is as follows:

\begin{itemize}

\item[$S$] generates all regular expressions

\item[$E_{+}$] generates all unparenthesized expressions where the last
               operator was $+$

\item[$E_{\bull}$] generates all unparenthesized expressions where the last
                operator was $\cdot$ (implicit concatenation)

\item[$E_{*}$] generates all unparenthesized expressions where the last
                operator was $*$ (Kleene closure)

\item[$C$] generates all unparenthesized expressions where there was
                no last operator (i.e., the constants)

\item[$P$] generates all parenthesized expressions

\end{itemize}

Here by ``parenthesized" we mean there is at least one pair of enclosing
parentheses.  Note this grammar allows $\tt{a**}$, but disallows $\tt{()}$.
Once we have an unambiguous grammar, we can use a powerful 
tool --- the Chomsky-Sch\"utzenberger theorem --- to enumerate
the number of expressions of size~$n$.

\section{Unambiguous context-free grammars and the Chomsky-Sch\"utzenberger theorem}

     Our principal tool for enumerating the number of strings of length
$n$ generated by an unambiguous context-free grammar is the
Chomsky-Sch\"utzenberger theorem \cite{Chomsky&Schutzenberger:1963}.  To
state the theorem, we first recall some basic notions about grammars;
these can be found in any introductory textbook on formal language
theory, such as \cite{Hopcroft&Ullman:1979}.

     A {\it context-free grammar} is a quadruple of the form
$G = (V, \Sigma, P, S)$, where $V$ is a nonempty finite set of variables,
$\Sigma$ is a nonempty finite set called the {\it alphabet}, $P$ is a
finite subset of $V \times (V \ \cup \ \Sigma)^*$ called the 
{\it productions}, and $S \in V$ is a distinguished variable called the
{\it start variable}.  The elements of $\Sigma$ are often called terminals.
A production $(A, \gamma)$ is typically
written $A \rightarrow \gamma$.  
     A {\it sentential form} is an element of $(V \ \cup \ \Sigma)^*$.
Given a sentential form $\alpha A \beta$, where $A \in V$ and 
$\alpha, \beta \in (V \ \cup \ \Sigma)^*$, we can apply the production
$A \rightarrow \gamma$ to get a new sentential form $\alpha \gamma \beta$.
In this case we write $\alpha A \beta \derives \alpha \gamma \beta$.
We write $\derivestar$ for the reflexive, transitive closure of $\derives$;
that is, we write $\alpha \derivestar \beta$ if we can get from $\alpha$
to $\beta$ by $0$ or more applications of $\derives$.
     The language generated by a context-free grammar is the set of all
strings of terminals obtained in~$0$ or more derivation steps from~$S$, the
start variable.  Formally, $L(G) = \lbrace x \in \Sigma^* \ : \ 
S \derivestar x \rbrace$.  A language is said to be {\it context-free} if it is
generated by some context-free grammar.
Given a sentential form $\alpha$ derivable from a variable $A$, we
can form a {\it parse tree} for $\alpha$ as follows:  the root is
labeled $A$. Every node labeled with a variable $B$
has subtrees with roots labeled, from left to right, with the elements 
of $\gamma$, where $B \rightarrow
\gamma$ is a production.  A grammar is said to be {\it unambiguous} if for
each $x \in L(G)$ there is exactly one parse tree for $x$; otherwise it is
said to be {\it ambiguous}.  It is known that not every context-free
language has
an unambiguous grammar.  

     Now we turn to formal power series; for more
information, see, for example \cite{Wilf:2006}.
A formal power series over a commutative ring
$R$ in an indeterminate $x$ is an infinite sequence of coefficients 
$(a_0, a_1, a_2, \ldots)$ chosen from $R$, and usually written 
$a_0 + a_1 x + a_2 x^2 + \cdots$.  The set of all such formal power series
is denoted $R[[x]]$.  The set of all formal power series is itself a
commutative ring, with addition defined term-by-term,
and multiplication defined by the usual Cauchy product
as follows:  if $f = a_0 + a_1 x + a_2 x^2 + \cdots $ and
$g = b_0 + b_1 x + b_2 x^2 + \cdots$, then
$fg = c_0 + c_1 x + c_2 x^2 + \cdots$, where 
$c_n = \sum_{i+j = n} a_i b_j$.    Exponentiation of formal series
is defined, as usual, by iterated multiplication, so that $f^2 = ff$, for
example.
A formal power series $f$ is said to be {\it algebraic} (over $R(x)$) if there
exist a finite number of polynomials with coefficients in $R$,
$r_0 (x), r_1(x), \ldots, r_n (x)$ such that
$$r_0 (x) + r_1(x)f + \cdots + r_n (x) f^n = 0.$$
 
The simplest nontrivial
examples of algebraic formal series are the {\it rational functions}, which
are quotients of polynomials $p(x)/q(x)$.   
Here is a less trivial example.  
The generating function of the Catalan numbers
$$f(x) = \sum_{n \geq 0} {{{2n} \choose n} \over {n+1}} x^{n+1}
= x + x^2 + 2x^3 + 5x^4 + 14x^5 + 42x^6 + 132x^7 + \cdots ,$$
is well known \cite{Stanley:1999} to satisfy 
$f(x) = {1 \over 2}(1 - \sqrt{1-4x})$,
and hence we have $f^2 -f + x = 0$.  Thus $f(x)$ is an algebraic (even quadratic!) formal series.

Now that we have the preliminaries, we can state the Chomsky-Sch\"utzenberger
theorem:

\begin{theorem}
\label{Chomsky-Schutzenberger}
If $L$ is a context-free language
having an unambiguous grammar, and $a_n := | L \ \cap \Sigma^n |$,
then $\sum_{n \geq 0} a_n x^n$ is a formal power
series in $\Zee[[x]]$ that is algebraic over
$\Que(x)$.
\end{theorem}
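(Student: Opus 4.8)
The plan is to prove this by structural induction on the grammar, setting up a system of polynomial equations from the productions and then invoking elimination theory to conclude algebraicity. Given an unambiguous context-free grammar $G = (V, \Sigma, P, S)$ generating $L$, for each variable $A \in V$ let $L_A = \{\, x \in \Sigma^* : A \derivestar x \,\}$ be the language generated from $A$, and let $f_A(x) = \sum_{n \geq 0} a_n^{(A)} x^n$ be its length generating function, where $a_n^{(A)} = |L_A \cap \Sigma^n|$. The quantity we care about is $f_S(x)$. First I would observe that each coefficient $a_n^{(A)}$ is finite (there are only finitely many strings of length $n$ over $\Sigma$), so each $f_A$ is a well-defined element of $\Zee[[x]]$, which settles the easy part of the claim.

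The heart of the argument is to translate the grammar into a system of equations. For each variable $A$ with productions $A \rightarrow \gamma_1 \orr \gamma_2 \orr \cdots \orr \gamma_k$, unambiguity guarantees that every string $x \in L_A$ has a unique parse tree, hence is produced by exactly one production $A \rightarrow \gamma_i$, and moreover the decomposition of $x$ according to the symbols of $\gamma_i$ is unique. Writing $\gamma_i = u_0 B_1 u_1 B_2 \cdots B_m u_m$ with $u_j \in \Sigma^*$ and $B_j \in V$, the contribution of this production to $f_A$ is $x^{|u_0| + |u_1| + \cdots + |u_m|} \prod_{j=1}^{m} f_{B_j}(x)$, since the Cauchy product of generating functions exactly counts length-additive decompositions into independent pieces. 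Summing over the productions for $A$ yields a polynomial identity
\[
f_A = P_A\bigl(x, (f_B)_{B \in V}\bigr),
\]
where $P_A$ has nonnegative integer coefficients. (To make the Cauchy product manipulations rigorous one should note that the grammar may be assumed to have no useless variables and that the formal series are being multiplied in $\Zee[[x]]$, where these rearrangements are valid; a small technical point is that productions of the form $A \to \epsilon$ or unit productions must be handled, but after standard normalization — or simply by allowing $P_A$ to be an arbitrary polynomial — this causes no difficulty, as long as the grammar remains unambiguous and every $f_A$ stays a genuine power series rather than requiring a fixed-point interpretation.)

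This gives a finite system of polynomial equations $f_A - P_A(x, (f_B)_B) = 0$ over $\Zee[x]$, satisfied by the tuple $(f_A)_{A \in V}$ of power series. The final step is elimination: the $f_A$ generate a finitely generated ring extension of $\Que(x)$ cut out by finitely many polynomial relations, so each $f_A$ — in particular $f_S$ — is algebraic over $\Que(x)$. Concretely, one can argue by repeatedly using resultants to eliminate the variables $f_B$ for $B \neq S$ one at a time from the system, producing a single nonzero polynomial $Q(x, y) \in \Que(x)[y]$ with $Q(x, f_S) = 0$; one must check that the elimination does not collapse to the zero polynomial, which follows because the system has the actual power-series solution $(f_A)_B$ and the variety it defines is positive-dimensional only in the $x$-direction. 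I expect this elimination step — specifically, verifying that the successive resultants remain nonzero so that a genuine (nontrivial) algebraic relation for $f_S$ survives — to be the main obstacle; the combinatorial translation of the grammar into equations, while requiring care about unambiguity, is otherwise routine. An alternative to resultant bookkeeping is to invoke the fact that the quotient field of $\Zee[x][(f_A)_B]$ has transcendence degree $1$ over $\Que$ (since it is generated over $\Que(x)$ by finitely many elements each satisfying, after elimination, an algebraic relation), whence every element, including $f_S$, is algebraic over $\Que(x)$.
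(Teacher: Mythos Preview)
The paper does not give its own proof of this theorem: after stating it, the paper explains the ``commutative image'' translation of productions into polynomial equations (exactly your first step), remarks that Chomsky and Sch\"utzenberger themselves gave no proof, and cites Kuich--Salomaa and Panholzer for complete arguments. Section~\ref{sec:groebner} then shows how to carry out the elimination in practice via Gr\"obner bases with an elimination order, the computational counterpart of your resultant approach --- but it does not argue that the resulting univariate polynomial is nonzero, since it is treating the theorem as a black box.

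Your outline is the standard one and you are right to single out the elimination step as the crux, but neither of your two proposed fixes actually closes it. Saying that ``the variety it defines is positive-dimensional only in the $x$-direction'' is a restatement of the conclusion, not an argument for it. The transcendence-degree paragraph is circular: you claim the field $\Que(x)(f_A : A \in V)$ has transcendence degree~$0$ over~$\Que(x)$ because each generator satisfies, after elimination, an algebraic relation over $\Que(x)$ --- but that is precisely what must be shown, and having $|V|$ polynomial relations among $|V|$ generators does not by itself force algebraicity (the relations could be algebraically dependent). The standard way to finish is to first pass to a proper grammar (no $\epsilon$-productions except possibly at the start symbol, no unit cycles; Chomsky normal form suffices and preserves unambiguity), so that $f_A(0)=0$ for every~$A$ and the Jacobian $\bigl(\partial(y_A - P_A)/\partial y_B\bigr)$ is the identity at $x=0$, $y=0$. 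The Jacobian criterion then shows the power-series solution $(f_A)_A$ lies on a zero-dimensional component of the variety over $\Que(x)$, so each $f_A$, in particular $f_S$, is algebraic over $\Que(x)$.
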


Furthermore, the equation of which the formal power series
is a root can be deduced as follows:  first, we carry out the following
replacements:
\begin{itemize}

\item Every terminal is replaced by a variable $x$

\item Every occurrence of $\epsilon$ is replaced by the integer $1$

\item Every occurrence of $\rightarrow$ is replaced by $=$

\item Every occurrence of $|$ is replaced by $+$

\end{itemize}
By doing so, we get a system of algebraic equations, called the
``commutative image'' of the grammar, which can then be solved
to find a defining equation for the power series.
Oddly enough, Chomsky and Sch\"utzenberger did not actually provide
a proof of their theorem.  A proof is given by Kuich and
Salomaa~\cite{Kuich&Salomaa:1985} and, more recently, by 
Panholzer~\cite{Panholzer:2005}.

Let's look at a simple example.  Consider the unambiguous grammar 
\begin{eqnarray*}
S & \rightarrow & M \ | \ U \\
M & \rightarrow & 0M1M \ | \ \epsilon \\
U & \rightarrow & 0S \ | \ 0M1U 
\end{eqnarray*}
which represents strings of ``if-then-else'' clauses.
Then this grammar has the following commutative image:
\begin{eqnarray}
S &=& M + U  \label{eq1}\\
M &=& x^2 M^2 + 1  \label{eq2}\\
U &=& Sx + x^2 MU \label{eq3} 
\end{eqnarray}

      This system of equations has the following power series solutions:
\begin{eqnarray*}
M &=& 1 + x^2 + 2x^4 + 5x^6 + 14x^8 + 42x^{10} +  \cdots \\
U &=& x + x^2 + 3x^3 + 4x^4 + 10x^5 + 15x^6 + 35x^7 + 56x^8 + \cdots \\
S &=& 1 + x + 2x^2 + 3x^3 + 6x^4 + 10x^5 + 20x^6 + 35x^7 + \cdots
\end{eqnarray*}

By the Chomsky-Sch\"utzenberger theorem, each variable satisfies an
algebraic equation over $\Que(x)$.  We can solve the system above to find
the equation for $S$, as follows:
first, we solve (\ref{eq3}) to get $U = {{Sx} \over {1-x^2 M}}$, and
substitute back in (\ref{eq1}) to get $S = M + {{Sx} \over {1-x^2 M}}$.
Multiplying through by $1-x^2 M$ gives
$S - x^2 MS = M - x^2 M^2 + Sx$, which, by (\ref{eq2}), is
equivalent to $S - x^2MS = 1+Sx$.  Solving for $S$, we get
$S = {1 \over {1-x^2M-x}}$.
Now (whatever $M$ and $x$ are) we have
$$(1-x^2M-x)^2 = x^2(1-M+x^2M^2) - x(2x-1) -(2x-1)(1-x^2M - x),$$
so we get $S^{-2} = -x(2x-1) - (2x-1)S^{-1}$ and hence
$$x(2x-1)S^2 + (2x-1)S + 1 = 0.$$
This is an equation for~$S$.

\section{Solving algebraic equations using Gr\"obner bases}\label{sec:groebner}



Before introducing the notion of Gr\"obner bases, we
describe some of the relevant mathematical notions from the
field of \emph{commutative algebra}. The exposition here is
impressionistic; readers familiar with algebraic geometry will
have no difficulty reformulating it in more formalized terms. 
For readers seeking for a more thorough introduction into the topic, 
there are accessible textbooks at the undergraduate level, such 
as~\cite{Cox&Little&OShea:2007}; a standard graduate level textbook  
is~\cite{Hartshorne:1977}.

We recall that a \emph{field} $k$ is a commutative ring with the
additional property that multiplicative inverses exist.
That is, for any non-zero $a \in k$, there exists an element $b$
such that $ab = ba = 1$; more informally, one can ``divide by $a$''.
Familiar examples of fields are the rational numbers $\mathbb{Q}$,
the real numbers $\mathbb{R}$, and the complex numbers $\mathbb{C}$.
On the other hand, the commutative ring $\mathbb{Z}$ of integers
is not a field, and the smallest field containing it is $\mathbb{Q}$.

For our application to the asymptotic enumeration of regular languages,
we are interested in the commutative ring of formal power series
$\mathbb{Z}[[x]]$.
This is not a field, but rather only a ring --- note, for example, that
the element $2x$ does not have a multiplicative inverse.
For the purposes of our
algebraic framework it is convenient to work with the field
$k = \mathbb{Q}((x))$ of formal Laurent series over $\mathbb{Q}$.
A formal Laurent series is defined similarly to a formal power series,
with the difference that finitely many negative exponents are allowed;
an example is
\[ {e^x\over x^2} = {1 \over {x^2}} + {1\over x} + {1 \over 2} + {x \over 6} + {{x^2} \over {24}} + \cdots \,. \]
The following discussion holds for any field $k$, but for intuition,
the reader may prefer to think of $k = \mathbb{R}$.

Given any field $k$ and indeterminates $X_1, X_2, \ldots, X_n$, there
are two important objects:
\begin{itemize}
\item the $n$-dimensional vector space $W = k^n$ over $k$,
with coordinates $X_i$ ($1 \leq i \leq n$); and
\item the ring $k[X_1, X_2, \ldots, X_n]$ of (multivariate) polynomials 
over~$k$ in $n$ indeterminates.
\end{itemize}
For instance, taking $k=\mathbb{Q}((x))$, 
the polynomial $Sx + x^2MU - U$, which we used 
in the previous section in Equation \eqref{eq3},
is member of the ring $k[S,M,U]$. The corresponding 
vector space $W$ has coordinates~$S$, $M$, and~$U$. 
Notice that~$x$ is not a coordinate 
of~$W$, but an artifact originating 
from the way the members of~$k$ are defined.

Given any collection of polynomials $\mathcal{F}$ in $R$,
we can define their \emph{vanishing set} $V(\mathcal{F})$
to be the set of common solutions in $W$; that is, all points
$(x_1, x_2, \ldots, x_n) \in W$ such that
\[ f(x_1, x_2, \ldots, x_N) = 0 \quad \text{for all } f \in \mathcal{F} \,. \]
As an example, let $W = \mathbb{R}^3$, with coordinates $X,Y,Z$.
Then, the vanishing set of the set of polynomials
$\mathcal{F} = \{X, Y+3, Z+Y-2\}$ is the
single point given by $(X,Y,Z) = (0,-3, 5)$; the vanishing set of
the single polynomial $Z - X^2 - Y^2$ is an upward-opening paraboloid.

The \emph{ideal} $\left< \mathcal{F} \right>$ generated by a
collection $\mathcal{F}$ of polynomials is the set of all
$R$-linear combinations of $\mathcal{F}$; that is, all polynomials
of the form
\[ p_1 \cdot f_1 + p_2 \cdot f_2 + \cdots + p_\ell \cdot f_\ell 
    \quad \text{where } p_i \in R, f_i \in \mathcal{F}
    \text{ for all } i \,. \]
Observe that the vanishing sets of a collection of polynomials
and their generated ideal are equal:
$V(\mathcal{F}) = V(\left< \mathcal{F} \right>)$.

A \emph{term ordering} on $R$ is a total order $\prec$ on the set of monomials
(disregarding coefficients) of $R$ satisfying
\begin{itemize}
\item \emph{multiplicativity} --- if $u,v,w$ are any monomials in
    $R$, then $u \prec v$ implies $wu \prec wv$;
\item \emph{well-ordering} --- if $\mathcal{F}$ is a collection of
monomials, then $\mathcal{F}$ has a smallest element under $\prec$.
\end{itemize}
Once a term ordering has been defined, one can then define the
notion of the \emph{leading term} of a polynomial, similar to the
univariate case.  For example, one defines the \emph{pure lexicographic order}
on $k[X,Y,Z]$ given by $Z \prec Y \prec X$ to be the ordering where
$X^a Y^b Z^c \prec X^d Y^e Z^f$ if and only if $(a,b,c) < (d,e,f)$
lexicographically.  With this ordering, an example of a polynomial
with its monomials in decreasing order is
\[ X^3 + X^2 Y + X^2 Z^7 + Y^9 + 1 \,; \]
its \emph{leading term} is $X^3=X^3Y^0Z^0$, and its \emph{trailing terms}
are $X^2 Y$, $X^2 Z^7$, $Y^9$ and $1$.

Given an ideal $I$, a \emph{Gr\"obner basis} $\mathcal{B}$ for $I$
is a set of
polynomials $g_1, g_2, \ldots, g_k$ such that the ideal generated
by the leading terms of the $g_i$ is precisely the initial ideal
of $I$, defined to be the set of leading
terms of polynomials in $I$.  It can be shown that
$\mathcal{B}$ generates $I$.  Furthermore, we say that
$\mathcal{B}$ is a \emph{reduced Gr\"obner basis} if
\begin{itemize}
\item the coefficient of each leading term in $\mathcal{B}$ is 1;
\item the leading terms of $\mathcal{B}$ are a \emph{minimal} set of
generators for the initial ideal of $B$; and
\item no trailing terms of $\mathcal{B}$ appear in the initial ideal
of $I$.
\end{itemize}
Once a term order has been chosen, reduced Gr\"obner bases are unique.
Note that in general, there are many term orderings for a polynomial ring $R$;
the computational difficulty of a computation involving Gr\"obner bases
is often highly sensitive to the choice of term ordering used.

Having established these preliminaries, we turn our attention
to solving a system of equations given by the commutative image of a
context free grammar.
Suppose we have a context-free grammar in the non-terminals
$S, N_1, N_2, \ldots, N_n$.  For each non-terminal $N$, let
$f_N$ also denote the generating function enumerating the language
generated by $N$.  Taking $k$ to be the field of formal
Laurent series $\mathbb{Q}((x))$, the Chomsky-Sch\"utzenberger theorem
implies $f_N \in k$ for every non-terminal $N$.  
Furthermore, by taking
the commutative image of the context-free grammar, we obtain
a sequence of polynomials $p_S, p_{N_1}, \ldots, p_{N_n}$, where
for every non-terminal $N$, the polynomial relation $p_N$ is the
commutative image of the derivation rule for $N$.  Note that
every such polynomial is in the polynomial ring
$\left(\mathbb{Z}[x]\right)[S, N_1, N_2, \ldots, N_n]$.

It follows from the definitions that for every non-terminal $N$,
\[ p_N(f_S, f_{N_1}, f_{N_2}, \ldots, f_{N_n}) = 0 \,; \]
that is, the $(n+1)$-tuple $(f_S, f_{N_1}, f_{N_2}, \ldots, f_{N_n})$
is a zero of the polynomial $p_N$.  Since this holds for every
non-terminal $N$, we can equivalently say that
$(f_S, f_{N_1}, f_{N_2}, \ldots, f_{N_n})$ is in the vanishing set
$V(I)$, where $I$ is generated by the polynomials
$p_S, p_{N_1}, p_{N_2}, \ldots, p_{N_n}$.

Our aim is to determine an algebraic equation satisfied by the power
series $f_S$.  To do this, we find a Gr\"obner basis $\mathcal{B}$ for $I$,
using an \emph{elimination ordering} on the indeterminate $S$.
The defining property of any such term ordering is that the monomials
involving only the indeterminate $S$ are strictly smaller than the
other monomials; namely, those involving at least one of
$N_1, N_2, \ldots, N_n$.  By the Chomsky-Sch\"utzenberger theorem
and the properties of Gr\"obner bases, the smallest polynomial $p$
in $\mathcal{B}$ will be a univariate polynomial in the indeterminate
$S$.  Since $p \in I$, and $(f_S, f_{N_1}, f_{N_2}, \ldots, f_{N_n})$
is in the vanishing set $V(I)$, we see that $p(f_S) = 0$; that is,
$p = 0$ is an algebraic equation satisfied by $f_S$.  (Note that in
previous sections, we simply use $S$ to denote $f_S$.)

As an example, we use Maple 13 to compute such an algebraic equation
for the example grammar in the previous section.  We give the commands,
followed by the produced output.
The commutative image of the grammar is entered as a list of
polynomials, given by \\
\verb!> eqs := [ -S + M + U, -M + x^2*M^2 + 1, -U + S*x + x^2*M*U ];! \\
\[ \mathit{eqs} := [-S+M+U,-M+{x}^{2}{M}^{2}+1,-U+Sx+{x}^{2}MU] \,. \]
Maple provides an elimination ordering called \verb!lexdeg!; to
compute a reduced Gr\"obner basis using this ordering, we enter the
command \\
\verb!> Groebner[Basis](eqs, lexdeg([M, U], [S]));! \\
\[ [1+ \left( -1+2\,x \right) S+ \left( -x+2\,{x}^{2} \right) {S}^{2},1+
 \left( -1+x \right) S+Ux,-1+ \left( 1-2\,x \right) S+Mx] \,. \]
The algebraic equation satisfied by $S$ is the first polynomial in
this set: \\
\verb!> algeq := %[1];! \\
\[ \mathit{algeq} := 1+ \left( -1+2\,x \right) S+ \left( -x+2\,{x}^{2} \right) {S}^{2} \,. \]
To compute the Laurent series zeros of $S$ using this polynomial,
we solve for $S$ and expand the solutions as Laurent series in
the indeterminate $x$: \\
\verb!> map(series, [solve(algeq, S)], x);!
\[ [(-{x}^{-1}-1-x-2\,{x}^{2}-3\,{x}^{3}-6\,{x}^{4}-10\,{x}^{5}+O \left( {
x}^{6} \right) ),(1+x+2\,{x}^{2}+3\,{x}^{3}+6\,{x}^{4}+10\,{x}^{5}+O
 \left( {x}^{6} \right) )] \,. \]
Our desired power series solution is the second entry in the above
returned list.

\section{Asymptotic bounds via singularity analysis}\label{sec:analysis}



If $L$ is a context-free language having an unambiguous grammar
and $f(x) = \sum a_n x^n$ is the
formal power series enumerating it, then $f(x)$ is algebraic over
$\mathbb{Q}(x)$ by Theorem~\ref{Chomsky-Schutzenberger}.
The previous section gave a procedure for computing an algebraic equation
satisfied by $f$; that is, we are able to determine a non-trivial
polynomial $P(x,S) \in \mathbb{Z}[x,S]$ such that
$P(x, f(x)) = 0$.
This section describes how \emph{singularity analysis}
can be used to determine the asymptotic growth rate of the coefficients
$a_n$.  We sketch some of the requisite notions from complex analysis
and provide a glimpse of the underlying theory;
more details can be found in Flajolet and
Sedgewick~\cite{Flajolet&Sedgewick:2009}.

The usefulness in considering complex analysis is that the formal
power series $f(x)$, defined purely combinatorially, can be viewed
as a function defined on an appropriate open subset of the complex plane
$\mathbb{C}$.  Such a function is called \emph{holomorphic} or
\emph{(complex) analytic}; this reinterpretation of $f(x)$ allows
us to apply theorems from complex analysis in order to derive bounds on
the asymptotic growth rate of the $a_n$ far tighter than what we could do with
purely combinatorial reasoning.

Indeed, assume that $L$ is an infinite context-free language ---
then there exists a real number $0 < R \leq 1$ called
the \emph{radius of convergence} for $f(x)$.  The defining properties
of $R$ are that:
\begin{itemize}
\item if $z$ is a complex number with $|z| < R$, then the infinite sum
$a_0 + a_1 z + a_2 z^2 + a_3 z^3 + \cdots$ converges; and
\item if $z$ is a complex number with $|z| > R$, then the infinite sum
$a_0 + a_1 z + a_2 z^2 + a_3 z^3 + \cdots$ diverges.
\end{itemize}
We note that the definition says nothing about the convergence of
$\sum a_i z^i$ when $|z| = R$.  Thus, defining $U$ to be the open ball
of complex numbers $z$ satisfying $|z| < R$, we can reinterpret $f$
as an \emph{analytic function on $U$}.  The connection between the
asymptotic growth of the coefficients $a_n$ and the number $R$ is given
by two theorems.

\begin{theorem}[Hadamard]
Given any power series, $R$ is given by the explicit formula:
\[ R = \frac{1}{\limsup_{n \to \infty} |a_n|^{1/n}} \,. \]
\end{theorem}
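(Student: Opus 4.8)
\textsc{Proof proposal.} The plan is to prove the Cauchy--Hadamard formula by a direct comparison with geometric series, treating the two regimes $|z| < R$ and $|z| > R$ separately. Write $L := \limsup_{n \to \infty} |a_n|^{1/n}$, with the usual conventions $1/0 = \infty$ and $1/\infty = 0$, so that the claim becomes: the series $a_0 + a_1 z + a_2 z^2 + \cdots$ converges whenever $|z| < 1/L$ and diverges whenever $|z| > 1/L$. The key tool is the two-sided characterization of the limit superior: for every $\varepsilon > 0$ there is an $N$ with $|a_n|^{1/n} < L + \varepsilon$ for all $n \geq N$, while $|a_n|^{1/n} > L - \varepsilon$ holds for infinitely many $n$.

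First I would establish convergence when $|z| < 1/L$. Choose a real number $r$ with $|z| < r < 1/L$; then $1/r > L$, so by the first half of the characterization above there is an $N$ with $|a_n|^{1/n} \leq 1/r$, hence $|a_n| \leq r^{-n}$, for all $n \geq N$. Consequently $|a_n z^n| \leq (|z|/r)^n$ for $n \geq N$, and since $q := |z|/r < 1$, the tail $\sum_{n \geq N} |a_n z^n|$ is dominated by the convergent geometric series $\sum_{n \geq N} q^n$. By the comparison test, $\sum a_n z^n$ converges absolutely. The same argument covers the case $L = 0$: there any $r > |z|$ works, convergence holds for every $z \in \mathbb{C}$, and $R = \infty$ as required.

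Next I would establish divergence when $|z| > 1/L$. Here $1/|z| < L$, so by the second half of the characterization there are infinitely many $n$ with $|a_n|^{1/n} > 1/|z|$, that is, $|a_n z^n| > 1$. In particular the general term $a_n z^n$ does not tend to $0$, so the series diverges by the term test. This also disposes of the case $L = \infty$: for every $z \neq 0$ we have $1/|z| < \infty = L$, the series diverges, and $R = 0$. Combining the two regimes shows that the number $R$ characterized in the previous section by the convergence/divergence dichotomy is precisely $1/L$.

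No step here is a genuine obstacle --- the statement is classical --- but the point requiring the most care is the correct use of the $\limsup$: one must invoke the ``eventually below $L + \varepsilon$'' property for the convergence half and the ``infinitely often above $L - \varepsilon$'' property for the divergence half, and then check that the degenerate values $L = 0$ and $L = \infty$ are absorbed by the same reasoning once the conventions $1/0 = \infty$ and $1/\infty = 0$ are adopted. Nothing needs to be said about the boundary $|z| = 1/L$, which is consistent with the earlier remark that the defining properties of $R$ say nothing about convergence when $|z| = R$.
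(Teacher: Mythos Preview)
Your argument is correct and is the standard textbook proof of the Cauchy--Hadamard formula. There is nothing to compare it against, however: the paper does not prove this theorem at all but simply quotes it as a classical result, immediately restating the defining properties of $\limsup$ and moving on to apply the formula. Your proof would serve perfectly well as the omitted justification.
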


The defining properties of $\limsup$ state that
\begin{itemize}
\item for any $\varepsilon > 0$, the relation
    $|a_n|^{1/n} < \frac{1}{R} + \varepsilon$
    holds for sufficiently large $n$; and
\item for any $\varepsilon > 0$, the relation
    $|a_n|^{1/n} > \frac{1}{R} - \varepsilon$
    holds for infinitely many $n$.
\end{itemize}
For our situation in particular, this implies that up to a sub-exponential
factor, $a_n$ grows asymptotically like $1/R^n$.  (This implies that for any
$\varepsilon > 0$,
we have $a_n \in O((\frac{1}{R} + \varepsilon)^n)$ and
$a_n \notin O((\frac{1}{R} - \varepsilon)^n)$.

We note that Hadamard's formula applies to \emph{any} power series,
not just to generating functions of context-free languages.

An elementary
argument shows that our assumption that $L$ is infinite implies $R \leq 1$;
similarly, our assumption that $L$ is context-free (and thus algebraic)
implies $R > 0$.
(The argument for showing $R > 0$ is harder, and is sketched here for
those familiar with complex analysis.  The algebraic curve
given by $P(z,y) = 0$ determines $d$ branches around $z = 0$ and the
power series $f(x) = \sum_n a_n x^n$ must be associated with one such branch.
Since the exponents of $f(x)$ are non-negative integers, this
must be an analytic branch at $0$; hence, $f(x)$ determines an analytic
function at $0$ and must have positive radius of convergence.)

The second theorem describes the convergence of the power series
$f(x)$ on the circle given by $|z| = R$.  A \emph{dominant singularity} for
$f(x)$ is a point $z_0$ on this circle such that the sum
$\sum a_n z_0^n$ diverges; the following result says that a
positive (real-valued) dominant singularity always exists.

\begin{theorem}[Pringsheim]
Let $f(x) = \sum_n a_n x^n$ be a power series with radius of convergence
$R > 0$.  If the coefficients $a_n$ are all non-negative, then
$R$ is a dominant singularity for $f(x)$.
\end{theorem}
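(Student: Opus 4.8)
The plan is to establish the standard form of Pringsheim's statement --- that the point $z = R$ is a genuine singularity of $f$, and hence, lying on the circle of convergence, a dominant one --- by a contradiction argument that converts analyticity at $R$ into convergence of $\sum a_n z^n$ for some $z$ with $|z| > R$, using non-negativity of the coefficients. (I take $R$ finite, which is the case in our application, since $L$ infinite forces $R \le 1$.) First I would normalize: replacing $a_n$ by $a_n R^n$ --- legitimate because $0 < R < \infty$ and this scales the radius of convergence by $1/R$ --- we may assume $R = 1$. Suppose for contradiction that $z = 1$ is \emph{not} a singularity of $f$. Then there is a $\delta > 0$ and a function analytic on $U \cup B$ and agreeing with $f$ on $U$, where $U = \{\, z : |z| < 1 \,\}$ and $B = \{\, z : |z - 1| < \delta \,\}$; the two prescriptions agree on the convex, hence connected, overlap $U \cap B$, so this is well defined, and I keep calling the extension $f$.

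Next I would move the centre of expansion inward. Fix $\varepsilon$ with $0 < \varepsilon < \delta / 2$ and put $r = 1 - \varepsilon \in (0,1)$. If $w \notin U \cup B$ then $|w - 1| \ge \delta$, whence $|w - r| \ge |w - 1| - |1 - r| \ge \delta - \varepsilon$; consequently $f$ is analytic on the open disk of radius $\delta - \varepsilon$ centred at $r$, and so its Taylor series $\sum_{k \ge 0} \frac{f^{(k)}(r)}{k!}(z - r)^k$ has radius of convergence at least $\delta - \varepsilon$. Since $\delta - \varepsilon > \varepsilon = 1 - r$, this series converges at some real point $s$ with $1 < s < 1 + \delta - 2\varepsilon$.

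The concluding step invokes non-negativity twice. Differentiating the original power series term by term, which is legitimate since $r \in U$, gives $\frac{f^{(k)}(r)}{k!} = \sum_{n \ge k} \binom{n}{k} a_n r^{n-k} \ge 0$, because $a_n \ge 0$ and $r > 0$. Hence every term of the convergent series $\sum_{k \ge 0} \frac{f^{(k)}(r)}{k!}(s - r)^k$ is non-negative (note $s - r > 0$), so the order of summation in the resulting double series may be interchanged freely:
\[
\sum_{k \ge 0} \frac{f^{(k)}(r)}{k!}(s - r)^k = \sum_{k \ge 0} \sum_{n \ge k} a_n \binom{n}{k} r^{n-k}(s - r)^k = \sum_{n \ge 0} a_n \sum_{k = 0}^{n} \binom{n}{k} r^{n-k}(s - r)^k = \sum_{n \ge 0} a_n s^n .
\]
Thus $\sum_n a_n s^n$ converges for a real $s > 1 = R$, contradicting the defining property of the radius of convergence (equivalently, Hadamard's formula). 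Therefore $z = R$ is a singularity of $f$, proving the theorem. I would add the remark that, by monotone convergence, $f(x) \to \sum_n a_n R^n \in [0, +\infty]$ as $x \to R^-$, so in the frequent case where this limit is infinite the series $\sum_n a_n R^n$ diverges outright, in accordance with the definition of a dominant singularity given above.

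The only step I expect to require genuine care is the geometric estimate showing that the Taylor expansion based at the interior point $r$ actually reaches past the circle $|z| = 1$ --- that is, the bound $|w - r| \ge \delta - \varepsilon$ for $w \notin U \cup B$ together with the choice $\varepsilon < \delta/2$ that makes $\delta - \varepsilon > 1 - r$. The interchange of summation, which would be the delicate point of a general argument of this shape, is here immediate precisely because every term is non-negative, and that is exactly where the hypothesis $a_n \ge 0$ is used in an essential way.
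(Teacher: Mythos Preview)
The paper does not give a proof of Pringsheim's theorem; it is stated without proof as a classical result and then used. Your argument is the standard one and is correct.

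One observation worth recording: the paper's working definition of a \emph{dominant singularity} --- a point $z_0$ with $|z_0|=R$ at which the series $\sum a_n z_0^n$ diverges --- is not what your argument establishes, and indeed is not what Pringsheim's theorem actually asserts. What you prove is that $f$ admits no analytic continuation to any open set containing $U \cup \{R\}$, which is the correct and useful statement; the series $\sum a_n R^n$ may well converge (take $a_n = 1/n^2$). You already flag this discrepancy in your final remark, and the subsequent use of the theorem in the paper --- locating $R$ among the zeros of the discriminant via the impossibility of analytic continuation --- requires precisely your version rather than the paper's literal definition.
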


The benefit of Pringsheim's theorem is that, for the sake of determining
$R$, it suffices to examine the positive real line for the singularities
of $f(x)$ considered as a \emph{function}, not just as a power series.
We make this more precise now, by introducing the concept of a
\emph{multi-valued function}.

Suppose that the power series $f(x)$ is algebraic of degree $d$ over
$\mathbb{Q}(x)$ --- under the assumption that $P$ is irreducible,
this means that the degree of the polynomial
$P(x,S) \in \mathbb{Z}[x,S]$ in the variable $S$ is $d$, and we may
write
\[ P = q_n S^n + q_{n-1} S^{n-1} + q_{n-2} S^{n-2} + \cdots + q_0 \,, \]
where each $q_i$ is a polynomial in $\mathbb{Z}[x]$ and $q_n$ is non-zero.
(If $P$ is reducible, factor it and replace it by an appropriate
irreducible factor.)

If we work in the algebraically closed \emph{Puiseux series field}
$\bigcup_{n \geq 1} \mathbb{C}((x^{1/n}))$, we obtain $d$ roots of
$P(x,S) = 0$, say, $g_1(x), g_2(x), \ldots, g_d(x)$, one of which coincides
with $f(x)$.  In general, these roots will not be power series with
non-negative integer coefficients, but instead will be more generalized
power series with complex coefficients and (possibly negative)
fractional exponents.

Let $D(x) \in \mathbb{Z}[x]$ be the \emph{discriminant} of $P$ with respect
to the variable $S$; this is readily computed via the formula
\[ D = \frac{{(-1)}^{n(n-1)/2}}{q_n} \cdot
    \operatorname{Res}(P, \frac{\partial}{\partial S} P, S) \,. \]
Here, $\operatorname{Res}$ denotes the \emph{resultant} of two polynomials,
defined to be the determinant of a matrix whose entries are given by
the coefficients of the polynomials.
The theoretical importance of $D$ is that it satisfies the identity
\[ D(x) = q_n^{2(n-1)} \prod_{i \neq j} \left(g_i(x) - g_j(x) \right) \,. \]
Define the \emph{exceptional set} $\Xi$ of $P$ to be the complex zeros
of $D$; note that this is a finite set.  For every point $z$ in the complement
$\mathbb{C} \setminus \Xi$, where $D$ does not vanish, there exist $d$
distinct solutions $y$ to the equation $P(z,y) = 0$.  Furthermore, the
$d$ distinct solutions vary continuously with $z$, and a locally continuous
choice of solutions locally determines a \emph{branch} (which is locally
an analytic function) of the
algebraic curve cut out by $P(z,y) = 0$; this is how a
\emph{multi-valued function} arises.

On the open set $U$, which we have defined to be the set of points
$z$ satisfying $|z| < R$, one such branch is given by our initial
power series $f(x)$.  By Pringsheim's theorem, $f(x)$ diverges
at $R$; this shows that $f(x)$, considered as a analytic function on $U$,
has no analytic continuation to a function on an open set containing
$U \cup \{ R \}$.
According to the discussion above, this shows that $R$ must be
in the exceptional set $\Xi$.

We have given a method to calculate an upper bound for the
growth rate of the $a_n$; in particular, we have shown parts $(1)$ and $(2)$
of:
\begin{theorem}
Let $f(x) = \sum_n a_n x^n$ be a formal power series where
$a_n \geq 0$ for each $n$. Suppose $P(x,S) = 0$ is a
non-trivial algebraic equation satisfied by $f(x)$, and let $D$ be
the discriminant of $P$ with respect to $S$.  Then, exactly one of the positive
real roots $R$ of $D$ satisfies the following properties:
\begin{enumerate}
\item for any $\varepsilon > 0$, $a_n \in O((\frac{1}{R} + \varepsilon)^n)$;
\item for any $\varepsilon > 0$, $a_n \notin O((\frac{1}{R} - \varepsilon)^n)$;
and
\item if $D$ has no zero $z_0 \neq R$ such that $|z_0| = R$, then for
any $\varepsilon > 0$, $a_n \in \Omega((\frac{1}{R} - \varepsilon)^n)$.
\end{enumerate}
\end{theorem}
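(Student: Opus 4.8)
The plan is to leverage the structure already developed in the section: parts (1) and (2) have been established via Hadamard's formula together with the observation that the radius of convergence $R$ lies in the exceptional set $\Xi$ — the zero set of the discriminant $D$. So the first task is to pin down \emph{which} positive real root of $D$ is $R$, and then to prove the lower bound in part (3) under the stated separation hypothesis. First I would argue that $R$ is indeed a positive real number: since $a_n \geq 0$ and (in the cases of interest) the series is non-trivial, $R$ is a well-defined element of $(0,\infty]$ by Hadamard; one reduces to the genuinely algebraic, non-polynomial case so that $R$ is finite, and then Pringsheim's theorem gives that $R$ is a dominant singularity of $f$. Combined with the branch-continuation argument already spelled out in the text, $R \in \Xi \cap (0,\infty)$. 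Uniqueness — ``exactly one'' — follows because $a_n \in O((1/R+\varepsilon)^n)$ for every $\varepsilon>0$ but not for $R$ replaced by any larger value, so no other positive real zero of $D$ can serve; this is exactly the content of (1) and (2) applied to a candidate root, so among the finitely many positive real zeros of $D$, precisely the smallest one that is an actual singularity of $f$ qualifies, and the argument shows it is $R$ itself.

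For part (3), the hypothesis is that $R$ is the \emph{unique} zero of $D$ on the circle $|z| = R$. The strategy is classical transfer-theorem / singularity analysis in the style of Flajolet and Sedgewick. Since $f$ is algebraic, on the disk $|z| < R$ it coincides with one analytic branch $g_i$ of the curve $P(z,y)=0$; away from $\Xi$ this branch continues analytically, and the only obstruction to continuing $f$ past the circle $|z|=R$ sits at points of $\Xi$ on that circle. By hypothesis the only such point is $z = R$ itself. Therefore $f$ extends analytically to a region of the form $\{|z| < R + \eta\} \setminus [R, R+\eta)$ — a disk with a single slit removed — for some $\eta > 0$. At the singular point $z = R$, the algebraic nature of $f$ forces a Puiseux expansion: there is a rational $p/q$ (with $p/q \notin \mathbb{Z}_{\geq 0}$, else $f$ would be analytic there) and a nonzero constant $c$ with
\[
f(z) = \text{(analytic part)} + c\,(1 - z/R)^{p/q} + \text{(higher-order Puiseux terms)}
\]
near $z = R$. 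Standard singularity analysis (the transfer theorem for functions with a unique algebraic dominant singularity) then yields the coefficient asymptotics $a_n = \Theta\!\big(R^{-n} n^{-p/q-1}\big)$, up to possibly a finite-order pole contribution; in any case $a_n$ grows like $R^{-n}$ times a fixed negative power of $n$. This immediately gives $a_n \in \Omega((1/R - \varepsilon)^n)$ for every $\varepsilon > 0$, which is (3).

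The main obstacle — and the step I would spend the most care on — is rigorously justifying the analytic continuation and the applicability of the transfer theorem: one must check that $f$, initially only a formal power series, really does continue to the slit disk, that the dominant singularity at $R$ is of the ``$\Delta$-analytic'' type required by Flajolet–Sedgewick (i.e. $f$ continues to a camembert-shaped domain indenting only at $R$), and that the Puiseux exponent is not a non-negative integer. The uniqueness hypothesis on zeros of $D$ on $|z|=R$ is exactly what rules out competing singularities that would break the transfer argument, but one still has to handle the subtlety that $D$ may vanish at points strictly inside $|z|<R$ where, however, $f$ itself remains analytic (the branch $f$ simply does not ``see'' those zeros). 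I would dispatch this by noting that $f$ is a power series with non-negative real coefficients, so by a Pringsheim-type argument it has no singularity in the \emph{open} disk $|z| < R$ at all, regardless of what $D$ does there; hence the first obstruction to continuation is genuinely on the circle, and by hypothesis it is the single point $R$. The remainder is then a citation to the transfer theorem together with the elementary remark that a negative (or non-integer) power of $n$ is still $\Omega((1/R-\varepsilon)^n \cdot R^n \cdot R^{-n})$-competitive — i.e. dominates $(1/R-\varepsilon)^n$ — for every $\varepsilon>0$.
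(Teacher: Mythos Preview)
Your proposal is correct and follows the same approach as the paper: parts (1) and (2) via Hadamard plus Pringsheim plus the branch-continuation argument showing $R$ lies in the zero set of the discriminant, and part (3) via Flajolet--Sedgewick singularity analysis. In fact you give considerably more detail for part (3) than the paper does --- the paper does not prove (3) at all but merely remarks that it follows from the stronger asymptotic $a_n \sim p(n)(1/R)^n$, which is precisely the transfer-theorem conclusion you sketch.
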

We remark that part $(3)$ is much more difficult to show; it is
implied by the stronger result that if $D$ has no zero $z_0 \neq R$ such that $|z_0| = R$, then there exists a polynomial $p$ such that $a_n \sim p(n)\cdot\left(\frac{1}{R}\right)^n$.

Given the list $\rho_1 < \rho_2 < \cdots < \rho_k$ of positive
real-valued elements of $\Xi$, there remains
the task of selecting which $\rho_j$ to use to provide an upper or
lower bound.
The bigger $j$ is, the better our upper bound will be; however, for this bound
to be valid, we must ensure that $\rho_j \leq R$.
For our purposes, we simply employ a boot-strapping method --- if is known
beforehand that $a_n \in O(n^s)$ for some $s$, then we simply
choose the minimal $j$ such that $1/\rho_j \leq s$; equivalently,
$\rho_j \geq 1/s$.  If this is not possible, we simply pick $j = 1$.
Similarly, for a lower bound, we choose the maximal $j$ such that
$\rho_j \leq 1/t$ if it is known that $a_n \in \Omega(n^t)$.
(With much more work, one can precisely identify $R$ ---
Flajolet and Sedgewick~\cite{Flajolet&Sedgewick:2009} describe an
algorithm ``Algebraic Coefficient Asymptotics'' that does this.)

As an illustration, we continue the Maple example in the previous
section to derive an asymptotic upper bound for the example grammar.
We first recall the algebraic equation satisfied by $S$: \\
\verb!> algeq;! \\
\[ 1+ \left( -1+2\,x \right) S+ \left( -x+2\,{x}^{2} \right) {S}^{2} \,. \]
We compute the discriminant $D$: \\
\verb!> d := discrim(algeq,S);! \\
\[ d := - \left( 2\,x+1 \right)  \left( -1+2\,x \right) \,. \]
The real roots of $D$ are given by: \\
\verb!> realroots := [fsolve(%)];!
\[ \mathit{realroots} := [ -0.5000000000, 0.5000000000 ] \,. \]
Finally, an upper bound is given by taking the inverse of the smallest positive real root: \\
\verb!> 1/min(op(select(type, realroots, positive)));! \\
\[ 2.000000000 \,. \]
Hence, $a_n \in O((2 + \varepsilon)^n)$ for any $\varepsilon > 0$.

\section{Lower bounds on enumeration of regular languages by regular
  expressions}
We now turn to lower bounds on $R_k (n)$.
     In the unary case ($k = 1$), we can argue as follows:  consider
any subset of $\lbrace \epsilon, a, a^2, \ldots, a^{t-1} \rbrace$.  
Such a subset can be denoted by a regular expression of (ordinary) length
at most $t(t+1)/2$.  Since there are $2^t$ distinct subsets, this
gives a lower bound of $R_1 (n) \geq 2^{\sqrt{2n}-1}$.  
Similarly, when $k \geq 2$,
there are $k^n$ distinct strings of 
length $n$, so $R_k (n) \geq k^n$. 
These naive bounds can be improved somewhat using
a grammar-based approach.   

      Consider a regular expression of the form
$$w_1 (\epsilon + w_2 (\epsilon + w_3 (\epsilon + ... ))) $$
where the $w_i$ denote nonempty words.  Every distinct choice
of the $w_i$ specifies a distinct language.  Such expressions can be
generated by the grammar
\begin{eqnarray*}
S & \rightarrow&  Y \ | \ Y(\epsilon + S) \\
Y & \rightarrow&  aY \ | \ a \,, \quad a \in \Sigma 
\end{eqnarray*}
which has the commutative image
\begin{eqnarray*}
S & = & Y + YS x^4 \\
Y & = & kxY + kx \,.
\end{eqnarray*}
The solution to this system is
$$ S = \frac{kx}{1-kx-kx^5} \,.$$
Once again, the asymptotic behavior of the coefficients of the power series for $S$
depend on the zeros of $1-kx-kx^5$.  The smallest (indeed, the only)
real root is, asymptotically as
$k \rightarrow \infty$, given by
$$ \sum_{i\geq 0} \frac{ (-1)^i {\binom{5i}{i}} }{4i+1} k^{-(4i+1)} =
	\frac{1}{k} - \frac{1}{k^5} + \frac{5}{k^9} - \frac{35}{k^{13}} + \cdots .$$
The reciprocal of this series is
$$ \sum_{i \geq 0}  \frac{4 {\binom{5i+5}{i+1}} }{5 (5i+4)} k^{1-4i} =
k + \frac{1}{k^3} - \frac{4}{k^7} + \frac{26}{k^{11}} - \frac{204}{k^{15}} + \frac{1771}{k^{19}} - \cdots.$$
For $k = 1$ the only real root of $1-kx-kx^5$ is approximately $.754877666$
and for $k = 2$ it is about $.4756527435$.   Thus we have

\begin{theorem}
$R_1 (n) = \Omega(1.3247^n)$ and
$R_2 (n) = \Omega(2.102374^n)$.
\end{theorem}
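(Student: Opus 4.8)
The plan is to apply the machinery assembled in Sections~4--6 to the specific grammar already exhibited just before the theorem statement, and then read off the numerical constants. First I would verify that the grammar
\[ S \rightarrow Y \orr Y(\epsilon + S), \qquad Y \rightarrow aY \orr a \quad (a \in \Sigma) \]
is unambiguous and that each distinct choice of the nonempty words $w_i$ in the expansion $w_1(\epsilon + w_2(\epsilon + w_3(\epsilon + \cdots)))$ yields a distinct language; this is the combinatorial heart of the lower bound, since it means the number of length-$n$ expressions generated is a genuine lower bound on $R_k(n)$. To see the languages are distinct: the shortest word in such a language is $w_1$, so $w_1$ is determined by the language; stripping the prefix $w_1$ and the structure $w_1\Sigma^*$ off, the ``residual'' recovers $w_2(\epsilon+w_3(\cdots))$, and one induces. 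Unambiguity of the grammar follows because a derivation of a terminal string must parse the leading block of $a$'s via $Y$ greedily up to the first parenthesis (or the end), and the recursion on $S$ then handles the tail uniquely.

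Next I would compute the generating function. The stated size measure is ordinary length, and under the replacements of the Chomsky--Sch\"utzenberger recipe the production $S \rightarrow Y \orr Y(\epsilon+S)$ contributes the factor $x^4$ for the four literal symbols ${\tt (}$, ${\tt +}$, ${\tt )}$ together with the $\epsilon$ replaced by $1$ --- wait, more carefully, $\epsilon\mapsto 1$ and the three parenthesis/plus symbols each become $x$, giving weight $x^3$ times $S$; I would double-check the bookkeeping against the displayed commutative image $S = Y + YSx^4$, reconciling any off-by-one (the excerpt's own count is the one to trust for the enumeration being claimed, and the theorem's constants are derived from it). With $Y = kxY + kx$ one solves $Y = kx/(1-kx)$ and substitutes to get, after clearing denominators,
\[ S = \frac{kx}{1 - kx - kx^5}\,. \]
By Theorems on singularity analysis (the Hadamard/Pringsheim results and the discriminant theorem of Section~6, with the boot-strapping remark there), the exponential growth rate of $[x^n]S$ is $1/\rho$ where $\rho$ is the least positive real root of the denominator $1 - kx - kx^5$; since $S$ has nonnegative coefficients this root is a dominant singularity, so $R_k(n) = \Omega((1/\rho - \varepsilon)^n)$, and in fact $\Omega((1/\rho)^n)$ up to polynomial factors. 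For $k=1$ the root $\rho \approx 0.754877666$ gives $1/\rho \approx 1.32472$, and for $k=2$ the root $\rho \approx 0.4756527435$ gives $1/\rho \approx 2.10274$. Rounding down to stay safely below the true growth rate yields $R_1(n) = \Omega(1.3247^n)$ and $R_2(n) = \Omega(2.102374^n)$, which is exactly the assertion.

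The main obstacle is not the analytic part --- that is essentially a plug-in to the already-developed theory --- but rather the claim that the generated languages are pairwise distinct, i.e.\ that the enumeration of \emph{expressions} honestly lower-bounds the enumeration of \emph{languages}. The subtlety is that one must rule out collisions both among expressions of the special nested form (handled by the prefix/residual argument above) and be sure the map from expressions to languages restricted to this family is injective regardless of the alphabet letters used inside the $w_i$; the key point is that for a language of this shape the lexicographically-first word is $w_1$ and the set $\{w : w \in L,\ w \ne w_1\}$ equals $w_1 \cdot (\text{the analogous language for } w_2, w_3, \ldots)$, which lets one peel off one $w_i$ at a time and conclude by induction on the number of blocks. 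Once injectivity is secured, the remaining steps --- unambiguity of the grammar, the algebraic manipulation to get $S = kx/(1-kx-kx^5)$, and the numerical root-finding --- are routine, and the two displayed asymptotic series for the root and its reciprocal (as $k \to \infty$) are a sanity check rather than a load-bearing part of the argument.
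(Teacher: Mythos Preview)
Your approach is essentially identical to the paper's: the text immediately preceding the theorem \emph{is} the proof, and you have faithfully reproduced it (with, if anything, more care on the injectivity step, which the paper simply asserts). The one point you flag as uncertain---why the weight is $x^4$ rather than $x^3$---is resolved by noting that in the production $S \to Y(\epsilon + S)$ the symbol $\epsilon$ is a \emph{terminal} of the regular-expression alphabet (contributing one unit of ordinary length), not the grammar's empty string, so the four literal characters $\texttt{(}$, $\epsilon$, $\texttt{+}$, $\texttt{)}$ each contribute a factor of $x$.
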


\subsection{Trie representations for finite languages}\label{sec:lower-trie}
We will now improve these lower bounds. To this end, 
we begin with the simpler problem of counting the number
of finite languages that may be specified by regular
expressions without Kleene star of size $n$. Non-empty
finite languages not containing $\epsilon$ admit a standard 
representation via a trie structure; an example is given Fig.~\ref{trie}.
\begin{figure}
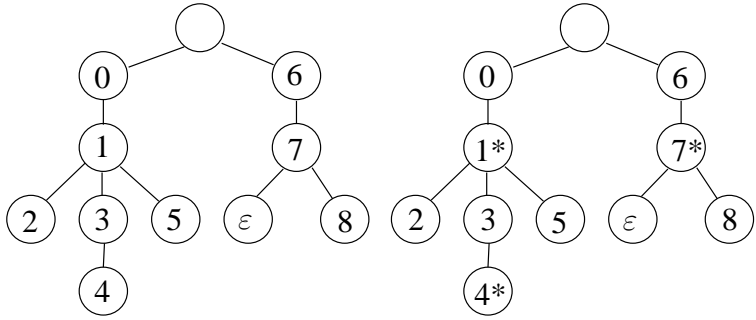

\centering
\subfigure[Representing the finite language \texttt{01(2+34+5)+67($\epsilon$+8)} as a trie.]{
\label{trie}
\input trie.pstex_t
}
\subfigure[Representing the infinite language 
\texttt{01*(2+34*+5)+67*($\epsilon$+8)} as a starred trie.]{
\label{startrie}
\input startrie.pstex_t
}
\caption{Example of a trie representation for a finite language (see 
Section~\ref{sec:lower-trie}) and 
of a starred trie representation 
for an infinite language (see Section~\ref{sec:lower-starred-trie}).}
\end{figure}

The words in such a language $L$ correspond to
the leaf nodes of the trie for $L$; moreover,
the concatenation of labels from the root
to a leaf node gives an expression for the
word associated with that leaf node.
For regular languages~$L$ and~$M$, we write
$M^{-1} L$ to denote the left quotient of $L$
by $M$; formally
$$M^{-1} L = \lbrace v \ : \ {\rm there\ exists} \  u \in M 
\ {\rm such\ that} \ uv \in L \rbrace .$$
If $M$ consists of a single word $w$, we also write
$w^{-1}L$ instead of $\{w\}^{-1}L$, and $w^{-n}L$ instead 
of $(w^{n})^{-1}L$.

%
%

For notational convenience, we take our alphabet to
be $\Sigma = \{ a_0, a_1, \ldots, a_{k-1} \}$, where
$k \geq 1$ denotes our alphabet size.
A trie encodes the simple fact that  
each nonempty finite language~$L$ not 
containing~$\epsilon$ can be uniquely 
decomposed as $L = \bigcup_i a_iL_i$, 
where $L_i = a_i^{-1}L$, and the index~$i$ runs 
over all symbols~$a_i\in \Sigma$ such that~$L_i$ 
is nonempty. This factoring out of common prefixes resembles
Horner's rule (see e.g.~\cite[p.~486]{Knuth:1997:TAOCP2}) for evaluating 
polynomials. We develop lower bounds by specifying a 
context-free grammar that generates regular expressions 
with common prefixes factored out. 
In fact, the grammar is designed so that if~$r$ is
a regular expression generated by the grammar, then
the structure of~$r$ mimics that of the trie for
$L(r)$ --- nodes with a single child correspond to
concatenations, while nodes with multiple children
correspond to concatenations with a union, see 
Table~\ref{tab:grammar-trie}.

\begin{table}[ht]
{
\renewcommand{\arraystretch}{1.1}
\renewcommand{\tabcolsep}{2mm}
\centerline{   
\begin{tabular}{|rl|}
\hline
\hline
$S\to$ & $Y \mid Z$ \\
$E \to$ & $Y \mid (Z) \mid (\epsilon+S)$ \\
$Y \to$ & $P_i \text{ for } 0 \leq i < k $\\
$Z \to$ & $P_{n_0} + P_{n_1} + \cdots + {P_{n_t}}$\\
	& $\text{ where } 0 \leq n_0 < n_1 < \cdots < n_t < k
	\text{ for } t > 0 $\\
$P_i  \to$ & $a_i \mid a_i E \text{ for } 0 \leq i < k$\\
\hline
\hline
\end{tabular}
}}
\caption{A grammar for mimicking tries with regular expressions.}
\label{tab:grammar-trie}
\end{table}

The set of regular languages represented corresponds to
all non-empty finite languages over $\Sigma$ not containing
the empty string $\epsilon$.
We briefly describe the non-terminals:
\begin{description}
\item[$S$] generates all non-empty finite languages not
containing $\epsilon$. 

\item[$E$] generates all non-empty finite languages
containing at least one word other than $\epsilon$. 

\item[$Y$] generates all non-empty finite languages
(not containing $\epsilon$) whose words all begin with
the same letter. 
The \texttt{for} loop is executed only once.

\item[$Z$] generates all non-empty finite languages
(not containing $\epsilon$) whose words do not all begin
with the same letter. 

\item[$P_i$] generates all non-empty finite languages
(not containing $\epsilon$) whose words all begin 
with~$a_i$. 
\end{description}

We remark that this grammar is unambiguous and that
no regular language is represented more than once;
this should be clear from the relationship between
regular expressions generated by the grammar and
their respective tries.

(Note that it is possible to slightly optimize this
grammar in the case of ordinary length to generate
expressions such as $\tt{0 + 00}$ in lieu of $\tt{0(\epsilon+0)}$,
but as it results in marginal improvements to the lower
bound at the cost of greatly complicating the grammar,
we do not do so here.)

Table~\ref{sf_lower} lists the lower bounds obtained
through this grammar.  In this table (and only this table),
each $\Omega(k^n)$ in the column corresponding to reverse polish notation
should be interpreted as ``not $O(k^n)$'' --- observe, for instance,
that all strings produced by our grammar for a unary alphabet have
odd reverse polish length.

\begin{table}
\center
\setlength{\tabcolsep}{0.1in}
\begin{tabular}{|r|lll|}
\hline
\hline
& ordinary & reverse polish & alphabetic \\ \hline
$1$ & $\Omega(1.3247^n)$ & $\Omega(1.2720^n)$ & $\Omega(2^n)$ \\
$2$ & $\Omega(2.5676^n)$ & $\Omega(2.1532^n)$ & $\Omega(6.8284^n)$ \\
$3$ & $\Omega(3.6130^n)$ & $\Omega(2.7176^n)$ & $\Omega(11.1961^n)$ \\
$4$ & $\Omega(4.6260^n)$ & $\Omega(3.1806^n)$ & $\Omega(15.5307^n)$ \\
$5$ & $\Omega(5.6264^n)$ & $\Omega(3.5834^n)$ & $\Omega(19.8548^n)$ \\
$6$ & $\Omega(6.6215^n)$ & $\Omega(3.9451^n)$ & $\Omega(24.1740^n)$\\
\hline
\hline
\end{tabular}
\caption{\label{sf_lower}Lower bounds for $R_k(n)$
with respect to size measure and alphabet size.}
\end{table}

\begin{remark}
Using the singularity analysis method explained in Section~\ref{sec:analysis}, 
these lower bounds were
obtained by boot-strapping off the trivial bounds of
$\Omega(k^n)$, $\Omega(k^{n/2})$ and $\Omega(k^n)$ for
the ordinary, reverse polish length and alphabetic width cases,
respectively.
\end{remark}

Before we generalize our approach to cover also 
infinite languages, we derive a formula showing how our 
lower bound on alphabetic width will increase along 
with the alphabet size~$k$.


To this end, we first state a version of the Lagrange implicit function
theorem as a simplification of \cite[Theorem 1.2.4]{Goulden&Jackson:1983}.
If $f(x)$ is a power series in $x$, we write
$[x^n] f(x)$ to denote the coefficient of $x^n$ in $f(x)$;
recall that the {\em characteristic} of a ring~$R$ with 
additive identity $0$ and multiplicative identity $1$ is 
defined to be the smallest integer~$k$ such 
that $\sum_{i=1}^k 1 = 0$, or zero if there is no such~$k$. 

\begin{lemma}
Let $R$ be a commutative ring of characteristic zero
and take $\phi(\lambda) \in R[[\lambda]]$
such that $[\lambda^0] \phi$ is invertible.
Then there exists a unique formal power series
$w(x) \in R[[x]]$ such that $[x^0] w = 0$ and
$w = x \phi(w)$. For $n \geq 1$,
\[ [x^n] w(x) = \frac{1}{n} [\lambda^{n-1}] \phi^n(\lambda) \,. \]
\end{lemma}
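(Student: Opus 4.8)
The plan is to treat the two assertions separately: first the existence and uniqueness of $w$, which is a routine coefficient recursion, and then the explicit formula, which I would obtain by formal residue calculus (the Lagrange--B\"urmann argument).

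For existence and uniqueness, write $w = \sum_{j\ge1} w_j x^j$, so that $[x^0]w = 0$ is automatic, and compare coefficients of $x^n$ in $w = x\,\phi(w)$. Since $\phi(w) = \sum_{k\ge0}\phi_k w^k$ and $w^k$ has order at least $k$ in $x$, one gets $w_n = [x^{n-1}]\phi(w)$, and the right-hand side is a polynomial with non-negative integer coefficients in $\phi_0,\dots,\phi_{n-1}$ and $w_1,\dots,w_{n-1}$; in particular $w_1 = [\lambda^0]\phi$. Thus the $w_j$ are forced one after another with no division, which establishes existence and uniqueness. Neither characteristic zero nor the invertibility of $[\lambda^0]\phi$ is needed for this part.

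For the formula, the point of departure is that $[x^1]w = [\lambda^0]\phi$ is a unit, so $w$ admits a compositional inverse $g(\lambda)\in R[[\lambda]]$ with $g(0)=0$; substituting $x\mapsto g(\lambda)$ into $w = x\,\phi(w)$ and using $w(g(\lambda))=\lambda$ gives $\lambda = g(\lambda)\,\phi(\lambda)$, i.e.\ $g(\lambda) = \lambda\,\phi(\lambda)^{-1}$, which is legitimate because $[\lambda^0]\phi$ is a unit. I would then work with formal Laurent series, write $[\lambda^{-1}]$ for the formal residue, and invoke the residue change-of-variables identity
\[ [x^{-1}]\,h(x) \;=\; [\lambda^{-1}]\bigl(h(g(\lambda))\,g'(\lambda)\bigr), \]
valid for every formal Laurent series $h$ since $g$ has invertible linear coefficient. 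Applying it to $h(x)=x^{-n-1}w(x)$, so that $h(g(\lambda))\,g'(\lambda)=\lambda\,g(\lambda)^{-n-1}g'(\lambda)$, and substituting $g^{-n-1}=\lambda^{-n-1}\phi^{n+1}$ together with $g'=\phi^{-1}-\lambda\,\phi'\phi^{-2}$, the residue collapses to $[\lambda^{n-1}]\phi^n-[\lambda^{n-2}](\phi'\phi^{n-1})$. Now $\phi'\phi^{n-1}=\tfrac1n(\phi^n)'$, so $[\lambda^{n-2}](\phi'\phi^{n-1})=\tfrac{n-1}{n}[\lambda^{n-1}]\phi^n$, and $[x^n]w=\bigl(1-\tfrac{n-1}{n}\bigr)[\lambda^{n-1}]\phi^n=\tfrac1n[\lambda^{n-1}]\phi^n$, as claimed.

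I expect the residue change-of-variables identity to be the one step needing genuine care, since we are over a commutative ring rather than a field. The plan is to prove it by $R$-linearity and $\lambda$-adic continuity, reducing to monomials $h=x^m$: for $m\ne-1$ one has $g^mg'=\tfrac{1}{m+1}(g^{m+1})'$, whose residue vanishes because the residue of any formal derivative is $0$; and for $m=-1$, writing $g=\lambda u$ with $u=\phi^{-1}$ a unit power series, $g^{-1}g'=\lambda^{-1}+u^{-1}u'$ with $u^{-1}u'\in R[[\lambda]]$, so its residue is $1$. This, together with the division by $n$ in the last step, is where the characteristic-zero hypothesis enters; everything else --- the coefficient recursion, the identity $g=\lambda/\phi$, and the final algebraic simplification --- is bookkeeping.
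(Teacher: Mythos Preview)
The paper does not supply a proof of this lemma: it is merely stated as a version of the Lagrange implicit function theorem, simplified from Goulden and Jackson, and then applied without further argument. So there is nothing in the paper for your proposal to be compared against.

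Your argument is the standard formal-residue (Lagrange--B\"urmann) proof and is correct. The existence and uniqueness by coefficient recursion is routine and, as you observe, needs no hypothesis on $R$. The residue change-of-variables identity, reduced by linearity and continuity to monomials $h=x^m$, is the classical device, and your subsequent algebra --- substituting $g=\lambda\phi^{-1}$, expanding $g'$, and collapsing via $\phi'\phi^{n-1}=\tfrac1n(\phi^n)'$ --- checks out line by line. One minor caveat: for $m\le -2$ your monomial case uses $(m+1)\,[\lambda^{-1}](g^m g')=0\Rightarrow[\lambda^{-1}](g^m g')=0$, which needs $m+1$ to be a non-zero-divisor in $R$; bare ``characteristic zero'' does not guarantee this (e.g.\ $R=\Z\times\Z/3\Z$ has characteristic zero but $3$ is a zero divisor). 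The same defect already sits in the lemma's own statement, since $\tfrac1n$ need not exist in an arbitrary characteristic-zero ring. The intended reading --- and the one under which your proof goes through cleanly --- is that $R$ is a $\Q$-algebra; alternatively one proves the integral form $n\,[x^n]w=[\lambda^{n-1}]\phi^n$ first.
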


Due to the simplicity of alphabetic width, the problem
of enumerating regular languages in this case may be
interpreted as doing so for rooted $k$-ary trees, where
each internal node is marked with one of two possible
colours. We thus investigate how our lower bound
varies with~$k$.

More specifically, consider a regular expression $r$
generated by the grammar from the previous section
and its associated trie.
Colour each node with a child labelled~$\epsilon$ black
and all other nodes white. After deleting all nodes
marked~$\epsilon$, call the resultant tree~$T(r)$.
This operation is reversible, and shows that we may
put the expressions of alphabetic width~$n$ in
correspondence with the $k$-ary rooted trees with
$n+1$ vertices where every non-root internal node
may assume one of two colours. In order to estimate
the latter, we first prove a basic result. The first
half of the following lemma is also 
found in~\cite[p.~68]{Flajolet&Sedgewick:2009}.

\begin{lemma}
There are $\frac{1}{n} \binom{kn}{n-1}$ $k$-ary trees
of $n$ nodes.
Moreover, the expected number of leaf nodes among
$k$-ary trees of~$n$ nodes is asymptotic to
$(1 - 1/k)^k n$ as $n \to \infty$.
\end{lemma}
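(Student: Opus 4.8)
The plan is to prove the two assertions by different tools: the exact count via the Lagrange inversion lemma just stated, and the asymptotic count of leaves via singularity analysis. First I would set up the ordinary generating function $T(x)=\sum_{n\ge 0}t_n x^n$ for $k$-ary trees counted by number of nodes, with the empty tree contributing $t_0=1$. Splitting a non-empty tree into its root and the ordered $k$-tuple of its (possibly empty) principal subtrees gives the functional equation $T=1+xT^k$. The substitution $w=T-1$ rewrites this as $w=x(1+w)^k$, which is exactly the setting of the preceding lemma with $\phi(\lambda)=(1+\lambda)^k$: one has $[\lambda^0]\phi=1$, a unit of $\mathbb{Z}$, and $[x^0]w=0$, so the lemma applies, and its uniqueness clause identifies $w$ with $T-1$. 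For $n\ge 1$ it then yields
\[
 t_n=[x^n]w=\frac1n[\lambda^{n-1}](1+\lambda)^{kn}=\frac1n\binom{kn}{n-1},
\]
which is the first claim.

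For the number of leaves I would introduce the bivariate generating function $T(x,u)$ in which $u$ marks the leaves, a leaf being a node whose $k$ children are all empty. The same decomposition gives $T(x,u)=1+x\bigl(u-1+T(x,u)^k\bigr)$, which collapses to $T=1+xT^k$ at $u=1$. Setting $L(x):=\partial_u T(x,u)\big|_{u=1}$ and differentiating the functional equation at $u=1$ gives $L=x\bigl(1+kT^{k-1}L\bigr)$, so that
\[
 L(x)=\frac{x}{1-kx\,T(x)^{k-1}}=\frac{x\,T'(x)}{T(x)^{k}},
\]
the last equality because implicit differentiation of $T=1+xT^k$ yields $T'=T^k/(1-kxT^{k-1})$. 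The expected number of leaves over the $k$-ary trees with $n$ nodes is then $[x^n]L\big/[x^n]T$.

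The asymptotics (for $k\ge 2$; the case $k=1$ is trivial) I would extract by singularity analysis. Writing $F(x,y)=y-1-xy^k$, the dominant singularity $\rho$ of $T$ is the common solution of $F=0$ and $\partial_y F=1-kxy^{k-1}=0$; eliminating $x$ gives $T(\rho)=\tau:=k/(k-1)$ and $\rho=(k-1)^{k-1}/k^k$. Since the $t_n$ are positive for all $n\ge 1$ the series $T$ is aperiodic, so $\rho$ is its unique dominant singularity and the smooth implicit-function (algebraic square-root) schema gives a local expansion $T(x)=\tau-c\sqrt{1-x/\rho}+O(1-x/\rho)$ with $c>0$; transfer theorems then give $[x^n]T\sim \frac{c}{2\sqrt\pi}\rho^{-n}n^{-3/2}$. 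In $L(x)=x\,T'(x)/T(x)^k$ the factor $x/T(x)^k$ tends to the non-zero constant $\rho/\tau^k$ while $T'(x)\sim \frac{c}{2\rho}(1-x/\rho)^{-1/2}$, so $L(x)\sim \frac{c}{2\tau^k}(1-x/\rho)^{-1/2}$ and hence $[x^n]L\sim \frac{c}{2\tau^k}\rho^{-n}(\pi n)^{-1/2}$. Forming the ratio, both $c$ and $\rho^{-n}$ cancel and one is left with
\[
 \frac{[x^n]L}{[x^n]T}\sim \frac{n}{\tau^{k}}=\Bigl(1-\tfrac1k\Bigr)^{k}n,
\]
as desired.

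The step I expect to be the main obstacle is justifying the application of singularity analysis to $T$ (and to the algebraic function $L$): one must verify that $T$ fits the hypotheses of the algebraic-singularity schema --- that $\rho$ is the only singularity on $|x|=\rho$ and that $T$ continues analytically to a $\Delta$-domain, so the transfer theorem is legitimate --- and that the only extra singularities $L$ could pick up, the zeros of $1-kxT(x)^{k-1}$, coincide with $\rho$. This is routine given the positivity and aperiodicity of the coefficients, but it is the part that genuinely needs care; pleasantly, the unknown constant $c$ drops out of the final ratio, so it never has to be computed.
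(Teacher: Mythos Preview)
Your argument for the first assertion is essentially identical to the paper's: both set up $w=T-1$ satisfying $w=x(1+w)^k$ and read off the coefficient via Lagrange inversion.

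For the second assertion your route is correct but genuinely different from the paper's. The paper applies Lagrange inversion a \emph{second} time, now to the bivariate series $g(x,y)$ satisfying $g=y\bigl(x-1+(1+g)^k\bigr)$, and obtains the \emph{exact} total number of leaves $c_n=\binom{k(n-1)}{n-1}$; the asymptotic then follows from the elementary estimate $c_n/a_n = n\binom{k(n-1)}{n-1}\big/\binom{kn}{n-1}\sim n((k-1)/k)^k$, with no complex analysis at all. You instead pass to $L(x)=\partial_u T(x,u)|_{u=1}=xT'(x)/T(x)^k$ and extract the ratio $[x^n]L/[x^n]T$ by singularity analysis at the branch point $\rho=(k-1)^{k-1}/k^k$. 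This is sound --- the aperiodicity check and the square-root schema go through as you say, and the unknown constant $c$ indeed cancels --- but it invokes transfer theorems and $\Delta$-analyticity where the paper gets by with a closed form and Stirling. The paper's approach buys an exact leaf count and avoids the analytic verification you flag as ``the main obstacle''; your approach, on the other hand, would still work in situations where no closed form for $c_n$ is available, so it is the more portable of the two.
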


\begin{proof}
Fix $k \geq 1$.
For $n \geq 1$, let $a_n$ denote the number of
$k$-ary rooted trees with $n$ vertices and consider
the generating series:
\[ f(x) = \sum_{n \geq 1} a_n x^n \,. \]
By the recursive structure of $k$-ary trees, we
have the recurrence:
\[ f(x) = t(1 + f(x))^k \,. \]
Thus, by the Lagrange implicit function 
theorem, we have
\[
a_n  =  [x^n] f(x) 
 =  \frac{1}{n} [\lambda^{n-1}] (1 + \lambda)^{kn} 
 =  \frac{1}{n} \binom{kn}{n-1} \,.
\]
We now calculate the number of leaf nodes among
all $k$-ary rooted trees with $n$ vertices.
Let $b_{n,m}$ denote the number of $k$-ary
rooted trees with $n$ vertices and $m$ leaf nodes
and $c_n$ the number of leaf nodes among all
$k$-ary rooted trees with $n$ vertices.
Consider the bivariate generating series:
\[ g(x,y) = \sum_{n,m \geq 1} b_{n,m} x^m y^n \,. \]
By the recursive structure of $k$-ary trees, we
have the recurrence:
\[ g(x,y) = y(x - 1 + (1 + g(x,y))^k) \,. \]
The Lagrange implicit function theorem once again
yields
\begin{eqnarray*}
c_n & = & \left. \frac{\partial}{\partial x} [y^n] g(x,y)
	\right|_{x = 1} \\
& = & \left. \frac{\partial}{\partial x} \frac{1}{n}
	[\lambda^{n-1}] (x - 1 + (1 + g(x,y))^k)^n
	\right|_{x = 1} \\
& = & \frac{1}{n} [\lambda^{n-1}] \left.
	\frac{\partial}{\partial x} (x - 1 + (1 + \lambda)^k)^n
	\right|_{x = 1} \\
& = & [\lambda^{n-1}] (1 + \lambda)^{k(n-1)} \\
& = & \binom{k(n - 1)}{n - 1} \,.
\end{eqnarray*}
Thus, the expected number of leaf nodes among $n$-node
trees is, as  $n \to \infty$ while having $k$ fixed,
\[\frac{c_n}{a_n}  =  \frac{n \binom{k(n - 1)}{n - 1}}
	{\binom{kn}{n-1}} 
 \sim n \left( \frac{k-1}{k} \right)^k.	
\]
\end{proof}

We wish to find a bound on the expected number of
subsets of non-root internal nodes among all $k$-ary
rooted trees with $n$ nodes, where a subset
corresponds to those nodes marked black.
Fix $k \geq 2$.
Since the map $x \mapsto 2^x$ is convex, for
every $\varepsilon > 0$ and sufficiently large $n$,
Jensen's inequality (e.g., \cite[Thm.\ 3.3]{Rudin:1966})
applied to the lemma above implies the following
lower bound on the number of subsets:
\[ 2^{(1 - (1 - 1/k)^k - \varepsilon) n} \,. \]
Since $-(1 - 1/k)^k > -1/e$ for $k \geq 1$,
we may choose $\varepsilon > 0$ such that
\[ -(1 - 1/k)^k - \varepsilon > -1/e \,. \]
This yields a lower bound of $2^{(1-1/e)n}$. 

Assuming $k \geq 2$ fixed, we now estimate $\binom{kn}{n-1}$.
By Stirling's formula, we have, as $n \to \infty$,

\begin{eqnarray*}
\binom{kn}{n-1} 
& = & \Theta\left(\left( \frac{k^k}{(k-1)^{k-1}} \right)^n \right) \,.
\end{eqnarray*}

Putting our two bounds together, we have the following
lower bound on the number of star-free
regular expressions of alphabetic width $n$,
when $n \to \infty$ while keeping $k$ fixed:
\[ \Omega\left( \left( \frac{ 2^{(1-1/e)} k^k}
	{(k-1)^{k-1}} \right)^n \right). \]

\subsection{Trie representations for some infinite regular languages}\label{sec:lower-starred-trie}

We now turn our attention to enumerating regular languages
in general; that is, we allow for regular expressions with
Kleene stars.

Our grammars for this section are based on the those for
the star-free cases.
Due to the difficulty of avoiding specifying duplicate
regular languages, we settle for a ``small'' subset of
regular languages.
For simplicity, we only consider taking the Kleene
star closure of singleton alphabet symbols, and we impose some 
further restrictions.

Recall the trie representation of a star-free
regular expression written in our common prefix
notation.
With this representation, we may mark nodes with stars
while satisfying the following conditions:
\begin{itemize}
\item each starred symbol must have a non-starred parent
	other than the root;
\item a starred symbol may not have a sibling or an
	identically-labelled parent (disregarding
	the lack of star) with its own sibling; and
\item a starred symbol may not have an identically-labelled
	child (disregarding the lack of star).
\end{itemize}

The first condition eliminates duplicates such as
\[ \texttt{0*11*0*1*0*} \leftrightarrow
\texttt{0*1*0*11*0*} \,; \]
the second eliminates those such as
\[ \texttt{01*} \leftrightarrow
\texttt{0($\epsilon$+11*)} \text{ and }
\texttt{0(1+2*1)} \leftrightarrow \texttt{02*1} \]
and the third eliminates those such as
\[ \texttt{0*0} \leftrightarrow \texttt{00*} \,. \]
In this manner, we end up with starred tries such as
in Fig.~\ref{startrie}.
Algorithm~\ref{cstree} illustrates how to recreate such
a starred trie from the language it specifies.
\begin{algorithm}
\caption{STAR-TRIE($L$)}
\label{cstree}
\begin{algorithmic}[1]
\REQUIRE $\epsilon \not\in L$, $L \neq \emptyset$
\STATE create a tree $T$ with unlabelled root
\FORALL{$a \in \Sigma$ such that $a^{-1} L \neq \emptyset$}
\STATE append STAR-TRIE-HELP($a^{-1} L$, $a$) below the root of $T$
\ENDFOR
\STATE return $T$
\end{algorithmic}
\end{algorithm}

\begin{algorithm}
\caption{STAR-TRIE-HELP($L$, $a$)}
\label{cstreehelp}
\begin{algorithmic}[1]
\STATE create a tree $T$ with root labelled $a$
\FORALL{$b \in \Sigma$ s.t. $b^{-1} L \neq \emptyset$}
  \IF[need a child labelled $b^*$]{
        $\left(b^{-n}L\right) \cap (\epsilon
	+ (\Sigma \setminus \{b\}) \Sigma^*) \neq \emptyset$
	for all $n \geq 0$}
    \STATE append a new $b^*$-node below the root of $T$
    \IF[$b^*$ will be an internal node]{$L \neq b^*$}
      \FORALL[determine children of $b^*$]{$c \in \Sigma\setminus\{b\}$ such that
	  $c^{-1} L \neq \emptyset$}
        \STATE append STAR-TRIE-HELP($c^{-1} L$, $c$) below
	    the $b^*$-node
      \ENDFOR
      
      \IF{$b \in L$}
        \STATE append a new $\epsilon$-node below the $b^*$-node
      \ENDIF
    \ENDIF
  \ELSE[need a child labelled $b$]
    \STATE append STAR-TRIE-HELP($b^{-1} L$, $b$) below
      the root of $T$
  \ENDIF
\ENDFOR
\IF{$\epsilon \in L$ and the root of $T$ has at least one unstarred child}
  \STATE append a new $\epsilon$-node below the root of $T$
\ENDIF
\STATE return $T$
\end{algorithmic}
\end{algorithm}

Let $T$ be any starred trie satisfying the conditions
above. Then $T$ represents a regular expression, which
in turn specifies a certain language.
We now show that when the algorithm is run with that
language as input, it returns the trie $T$ by arguing
that at each step of the algorithm when a particular node
(matched with language $L$ if the root and $aL$ otherwise) is being processed,
the children are correctly reconstructed.

We first consider children of the root.
By the original trie construction (for finite
languages without $\epsilon$), no such children
may be labelled $\epsilon$.
Thus, by the first star condition, the only
children may be unstarred alphabet symbols.
Thus, line~$2$ of Algorithm~\ref{cstree}
suffices to find all children of the root
correctly.

Now consider a non-root internal node, say
labelled $a$.
By the third star condition, a starred node
may not have a child labelled with the same
alphabet symbol, so if $a$ has a child
labelled $b^*$, then
\begin{align}\label{condition:bstar} 
\left(b^n\right)^{-1} L \cap (\epsilon
        + (\Sigma \setminus \{b\}) \Sigma^*)
	\mbox{ is non-empty for all $n \geq 0$}. 
\end{align}
Conversely, by the second condition, a
starred node may not have an
identically-labelled parent that 
has~$\epsilon$ 
as a sibling, so if~\eqref{condition:bstar}
holds,
then $a$ must 
have a child labelled $b^*$.
By the second star condition, a starred
node may not have siblings, so the algorithm
need not check for other children once a
starred child is found.
This shows that line~$3$ of Algorithm~\ref{cstreehelp}
correctly identifies all starred children of~$a$.
Assuming~$a$ has a starred child~$b^*$, then
by the third condition, line~$6$ of
Algorithm~\ref{cstreehelp} correctly recovers
all children of~$b^*$.
All remaining children of~$a$ have no stars, and
line~$14$ of Algorithm~\ref{cstreehelp} suffices
to find all children labelled with $a\in \Sigma$; 
the special case of an 
$\epsilon$-child below~$a$ is covered by line~$17$.

\begin{table}
{
\renewcommand{\arraystretch}{1.1}
\renewcommand{\tabcolsep}{2mm}
\centerline{   
\begin{tabular}{|rl|
}
\hline
\hline
$S  \to$ & $Y \mid Z $\\
\hline\hline
$E  \to$ & $Y \mid (Z) \mid (\epsilon+Y') \mid (\epsilon+Z) $\\
$E_i  \to$ & $Y_i \mid (Z_i) \mid (\epsilon+Y_i') \mid (\epsilon+Z_i)
\text{ for } 0 \leq i < k $\\
\hline\hline
$Y  \to$ & $P_i \text{ for } 0 \leq i < k $\\
$Y'  \to$ & $P_i' \text{ for } 0 \leq i < k $\\
$Y_i  \to$ & $P_j \text{ for } 0 \leq i,j < k
	\text{ and } i \neq j $\\
$Y_i'  \to$ & $P_j' \text{ for } 0 \leq i,j < k
	\text{ and } i \neq j $\\
\hline\hline
$Z  \to$ & $P_{n_0}' + P_{n_1}' + \cdots + P_{n_t}'$\\
	&$\text{ where } 0 \leq n_0 < n_1 < \cdots < n_t < k
	\text{ for } t > 0 $\\
$Z_i  \to$ & $P_{n_0}' + P_{n_1}' + \cdots + P_{n_t}'$\\
	&$\text{ as above, but with }
	n_j \neq i \text{ for all } 0 \leq j \leq t$ \\
\hline\hline
$P_i  \to$ & $a_i \mid a_i E \mid a_ia_j^* \mid a_ia_j^* E_j$\\
	& $\text{ for } 0 \leq i,j < k$ \\
$P_i'  \to$ & $a_i \mid a_i E \mid a_ia_j^* \mid a_ia_j^* E_j$\\
	& $\text{ for } 0 \leq i,j < k \text{ and } i \neq j$\\
\hline
\hline
\end{tabular}
}
}
\caption{A grammar generating all regular expressions 
meeting all three star conditions.}\label{tab:grammar-trie2}
\end{table}


We give a grammar that generates expressions meeting
these conditions in Table~\ref{tab:grammar-trie2}.
As before, we take our alphabet to be
$\Sigma = \{ a_0, a_1, \ldots, a_{k-1} \}$.
We describe the roles of the non-terminals of 
the grammar in Table~\ref{tab:grammar-trie2}.

\begin{description}
\item[$S$] generates all expressions --- this corresponds
	to Algorithm~\ref{cstree}.

\item[$E, E_i$] generate expressions that may be concatenated
	to non-starred and starred alphabet symbols, respectively.
	The non-terminal $E$ corresponds to lines $2$ and $13$ while
	$E_i$ corresponds to line $5$ of Algorithm~\ref{cstreehelp}.
	These act the same as $S$ except for the introduction of
	parentheses to take precedence into account and restriction
	that no prefixes of the form $\epsilon + aa^*$ are
	generated, used to implement the second condition.

	Additionally, $E_i$ has the restriction that its first
	alphabet symbol produced may not be $a_i$ --- this
	is used to implement the third condition.

\item[$Y, Y', Y_i, Y_i'$] generate expressions whose prefix
	is an alphabet symbol.
	As a whole, these non-terminals correspond to
	Algorithm~\ref{cstreehelp}, and may be considered
	degenerate cases of $Z$ and $Z_i$; that is, trivial
	unions.

	The tick-mark signifies that expressions of the form
	$aa^*$ for $a \in \Sigma$ are disallowed, used to
	implement the second condition.
	The subscripted $i$ signifies that the initial alphabet
	symbol may not be $a_i$, used to implement the third
	condition.
	
\item[$Z, Z_i$] generate non-trivial unions of expressions
	beginning with distinct alphabet symbols --- $Z$
	corresponds to line $2$ of Algorithm~\ref{cstree}
	and line $13$ of Algorithm~\ref{cstreehelp}, while
	$Z_i$ corresponds to line $5$ of
	Algorithm~\ref{cstreehelp}.

	The subscripted $i$ signifies that none of initial
	alphabet symbols may be $a_i$, used to implement
	the third condition.

\item[$P_i, P_i'$] generate expressions beginning with
	the specified alphabet symbol $a_i$. They correspond
	to line $1$ of Algorithm~\ref{cstreehelp}.

	The tick-mark signifies that expressions may not
	have the prefix $a_ia_i^*$, used to implement
	the second condition.
\end{description}

Since the algorithm correctly returns a trie when run on
the language represented by the trie, the correspondence
between the algorithm and the grammar gives us the
following result.

\begin{theorem}
The grammar above is unambiguous and the generated regular
expressions represent distinct regular languages.
\end{theorem}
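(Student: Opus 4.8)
The plan is to exhibit an explicit correspondence between (i) the parse trees of the grammar in Table~\ref{tab:grammar-trie2} rooted at $S$, (ii) the starred tries satisfying the three star conditions, and (iii) the runs of Algorithm~\ref{cstree} on languages of the form $L(r)$ for $r$ a generated expression. Once the first map is shown to be a bijection and the third to be a left inverse of it (this being exactly the algorithm-correctness statement already proved above), both halves of the theorem drop out formally.

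First I would set up the map from parse trees to starred tries by structural induction on the derivation, assigning to each non-terminal a role in a trie fragment. A subtree rooted at $P_i$ (resp.\ $P_i'$) reads off as a subtrie whose root is labelled $a_i$, the tick-mark recording that the forbidden prefix $a_i a_i^*$ was not used; a subtree rooted at $Y, Y', Y_i, Y_i'$ gives the degenerate one-child ``union'' of such a subtrie, with the subscript $i$ recording that the heading symbol is not $a_i$; a subtree rooted at $Z$ or $Z_i$ gives a genuine multi-child node, again with the subscript tracking the excluded first symbol; and a subtree rooted at $E$ or $E_i$ supplies the material hung below a node, the alternatives $(\epsilon+Y')$ and $(\epsilon+Z)$ producing the optional $\epsilon$-child. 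Going through the right-hand sides of Table~\ref{tab:grammar-trie2} one at a time, I would verify that the tick-marks and subscripts translate into exactly the three bulleted star conditions, so that the trie obtained always conforms to them, and conversely that every expression read off a conforming trie can be generated (surjectivity). Along the way one records that $L(r)$ equals the language of the associated trie and that $r$ can be read back off that trie unambiguously --- the parentheses introduced only by $E$ and $E_i$ make this reading well defined --- so that the composite parse tree $\mapsto$ trie $\mapsto$ expression recovers the yield of the parse tree.

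Next I would match the grammar with Algorithm~\ref{cstreehelp} line by line, following the descriptions already given: line~$1$ corresponds to $P_i$, lines~$2$ and $13$ to $E$, line~$5$ to $E_i$, the choice of a starred child $a_i a_j^*$ in $P_i$ to line~$3$, and so on. This shows that running STAR-TRIE on $L(r)$ for generated $r$ reconstructs precisely the trie associated with any parse tree of $r$; combined with the already-proved fact that STAR-TRIE applied to the language of a conforming starred trie returns that very trie, we conclude that the trie associated with a parse tree depends only on $L(r)$, hence only on $r$. Unambiguity follows: two parse trees of the same $r$ have the same associated trie, and since the trie $\mapsto$ parse tree direction of the correspondence is single-valued, the two parse trees coincide. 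Distinctness of the generated languages follows likewise: if $r_1 \neq r_2$ were generated with $L(r_1) = L(r_2)$, then STAR-TRIE would return the same trie on both inputs, but $r_1$ and $r_2$ are each read off their (now equal) associated tries, forcing $r_1 = r_2$, a contradiction.

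I expect the main obstacle to be the bookkeeping in the structural induction of the first step: one must simultaneously track which alphabet symbol heads the expression (to justify the subscripts on $E_i, Y_i, Y_i', Z_i$), whether a top-level $a a^*$ prefix is present (the tick-marks), and the presence of an $\epsilon$-summand, and then check for each right-hand side in Table~\ref{tab:grammar-trie2} that these invariants are preserved and coincide with the three star conditions on the resulting trie. The algorithm-to-grammar matching and the two concluding deductions are then essentially routine.
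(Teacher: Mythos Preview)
Your proposal is correct and follows essentially the same approach as the paper: the paper's proof is the single sentence preceding the theorem, which invokes the already-established correctness of Algorithm~\ref{cstree} together with the stated correspondence between non-terminals and algorithm lines, and declares the result to follow. Your write-up simply makes explicit the two directions of that correspondence (parse tree $\leftrightarrow$ starred trie) and spells out how unambiguity and distinctness each drop out, which is exactly the content the paper leaves implicit.
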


Table~\ref{gen_lower} lists the improved lower bounds for
$R_k(n)$. These lower bounds were obtained 
via singularity analysis, as explained in Section~\ref{sec:analysis},
boot-strapping off the bounds in Table~\ref{sf_lower}.\footnote{The Maple worksheets used to 
derive these bounds can be accessed at the second author's 
personal homepage via 
\url{http://math.stanford.edu/~jlee/automata/}}

\begin{table}
\center
\setlength{\tabcolsep}{0.125in}
\begin{tabular}{|r|lll|}
\hline
\hline
$k$ & ordinary & reverse polish & alphabetic \\ \hline
$1$ & $\Omega(1.3247^n)$ & $\Omega(1.2720^n)$ & $\Omega(2^n)$ \\
$2$ & $\Omega(2.7799^n)$ & $\Omega(2.2140^n)$ & $\Omega(7.4140^n)$ \\
$3$ & $\Omega(3.9582^n)$ & $\Omega(2.8065^n)$ & $\Omega(12.5367^n)$ \\
$4$ & $\Omega(5.0629^n)$ & $\Omega(3.2860^n)$ & $\Omega(17.6695^n)$ \\
$5$ & $\Omega(6.1319^n)$ & $\Omega(3.6998^n)$ & $\Omega(22.8082^n)$ \\
$6$ & $\Omega(7.1804^n)$ & $\Omega(4.0693^n)$ & $\Omega(27.9500^n)$\\
\hline
\hline
\end{tabular}
\caption{\label{gen_lower}Improved lower bounds for $R_k(n)$
with respect to size measure and alphabet size.}
\end{table}

\section{Upper bounds on enumeration of regular languages by regular
  expressions}\label{sec:upper}
Turning our attention back to upper bounds for $R_k(n)$, 
we develop grammars
for regular expressions such that every regular language
is represented by at least one shortest regular expression
generated by the grammar, where a regular expression~$r$ of
size~$n$ is said to be shortest if there is no 
expression~$r'$ of size less than $n$ with $L(r) = L(r')$. 

To this end, we consider certain ``normal forms'' 
for regular expressions, with the 
property that transforming a regular expression into normal form never
increases its size. Again, 
size may refer to one of the various measures introduced 
before. With such a normal form, it suffices to enumerate all 
regular expressions in normal form to obtain improved upper 
bounds on $R_k(n)$ for various measures.

\subsection{A grammar based on normalized regular expressions}

We begin with a simple approach, which will be further refined
later on. As concatenation and sum are associative, we consider 
them to be variadic operators taking at least~$2$ arguments and impose the
condition that in any parse tree, neither of them are permitted
to have themselves as children. Also, by the commutativity of the
sum operator, we impose the condition that the summands of each
sum appear in the following order:
First come all summands which are terminal symbols, then all summands 
which are concatenations, and finally all starred summands.
Also, we can safely omit all 
subexpressions of the form $s^{**}$, $s^{*}+\epsilon$, 
$(s+\epsilon)^{*}$, $s+\epsilon+\epsilon$: 
occurrences of these can be
replaced with occurrences of $s^*$, $s^{*}$, 
$s^{*}$, and $s+\epsilon$, respectively. 
Here the latter subexpressions have
size no larger than the former ones, and  
this holds for all size measures considered.
These observations immediately lend themselves for 
a simple unambiguous grammar, such as the one 
listed in Table~\ref{tab:grammar-simple}. 
The meaning of the variables is as follows:

\begin{table}
{
\renewcommand{\arraystretch}{1.1}
\renewcommand{\tabcolsep}{2mm}
\centerline{   
\begin{tabular}{|rl|}
\hline
\hline
$S \to $ & $\,  Q \mid A \mid T \mid C \mid K  $\\
\hline
\hline
$Q \to $ & $\, A \mathit{{}+ \epsilon} \mid 
          T \mathit{{}+ \epsilon} \mid 
          C\mathit{{}+ \epsilon}  $\\
\hline
\hline
$A \to$ & $\,  T + A_T \mid C + A_C \mid K + A_K $\\
$A_T \to $ & $\, T \mid T + A_T \mid A_C $\\
$A_C \to $ & $\, C \mid C + A_C \mid A_K$\\
$A_K \to $ & $\, K \mid K + A_K$\\
\hline
\hline
$T    \to $ & $\, a_1\mid a_2\mid \cdots \mid a_k$\\
\hline
\hline
$C \to $ & $\, C_0\, C_0 \mid C_0\, C $\\
$C_0 \to$ & $\, (Q) \mid (A) \mid  T \mid K$\\
\hline
\hline
$K \to $ & $\, (A)\,* \mid T\,* \mid (C)\,* $\\
\hline
\hline
\end{tabular}
}
}
\caption{A simple unambiguous grammar for generating at least one
shortest regular expression for each regular language.}\label{tab:grammar-simple}
\end{table}

\begin{itemize}
\item[$S$] generates all regular expressions obeying the 
abovementioned format. Among them,
\item[$Q$] generates those expressions of the form~$r+\epsilon$,
\item[$A$] generates those of the form $r+s$, i.e. ``additions'',
\item[$T$] generates those which are terminal symbols,
\item[$C$] generates those of the form $rs$, i.e. concatenations,
\item[$C_0$] generates the ``factors'' apppearing inside 
concatenations (which are themselves not concatenations), and
\item[$K$] generates those of the form $r^*$, i.e. Kleene stars; 
\end{itemize}
finally, the ``summands'' in expressions of type~$A$ are subdivided 
into subtypes~$A_T$, $A_C$ and $A_K$, used for handling summands 
which are terminal symbols, concatenations, or Kleene stars, respectively.

In the special case of unary alphabets, not only union, but also 
concatenation (again viewed as a variadic operator) is 
commutative. In this case, we may impose a similar ordering of 
factors as done for summands, and thus we can replace the rule 
with~$C$ as left-hand side with the rules given in 
Table~\ref{tab:grammar-simple2}.
\begin{table}
{
\renewcommand{\arraystretch}{1.1}
\renewcommand{\tabcolsep}{2mm}
\centerline{   
\begin{tabular}{|rl|}
\hline
$C \to$ &  $(Q)C_Q \mid (A)C_A \mid TC_T \mid KC_K$\\
$C_Q \to $ & $(Q) \mid (Q)C_Q \mid C_A$\\
$C_A \to $ & $A \mid (A)C_A \mid C_T$\\
$C_T \to $ & $T \mid TC_T \mid C_K$\\
$C_K \to $ & $K \mid KC_K $\\
\hline
\end{tabular}
}
}
\caption{Rules for concatenation over unary alphabets, which in 
that case is commutative.}\label{tab:grammar-simple2}
\end{table}

\subsection{A grammar based on strong star normal form}

We now refine the above approach by considering only regular 
expressions in strong star normal form~\cite{Gruber&Gulan:2010}, 
a notion that we recall in the following.
 
Since~$\emptyset$ is only needed to denote the empty set, and the need
for~$\varepsilon$ can be substituted by the operator $L^? = L \cup
\{\epsilon\}$, an alternative syntax introduces also the $^?$-operator
and instead forbids the use of~$\emptyset$ and~$\varepsilon$ inside
non-atomic expressions. The definition of strong star normal form is
most conveniently given for this alternative syntax.

\begin{definition}\label{defn:senf}
The operators $\circ$ and $\bullet$ are defined 
on regular expressions. The first operator is given by: 
$a^\circ = a$, for $a\in \Sigma$; 
$(r+s)^\circ = r^\circ + s^\circ$; $r^{?\circ} = r^\circ$; 
$r^{*\circ}  =r^\circ$; finally, 
$(rs)^\circ= rs$, if $\epsilon \notin L(rs)$ and
$r^\circ +s^\circ$ otherwise.
The second operator is given by: 
$a^\bullet = a$, for $a\in \Sigma$; 
$(r+s)^\bullet = r^\bullet + s^\bullet$; $(rs)^\bullet = r^\bullet s^\bullet$;
$r^{*\bullet}  = r^{\bullet\circ*}$; finally, $r^{?\bullet} = r^{\bullet}$, if 
$\epsilon\in L(r)$ and $r^{?\bullet} = r^{\bullet?}$ otherwise. 
The {\em strong star normal form} of an expression~$r$ 
is then defined as $r^\bullet$.
\end{definition}

An easy induction shows that the transformation into 
strong star normal form preserves the described 
language, and that it is weakly monotone 
with respect to all usual size measures. 
We sketch a proof for the case of ordinary length. 

\begin{lemma}
Let $r$ be a regular expression without 
occurrences of the symbol~$\emptyset$, 
and let $r^\bullet$ be its strong star normal form.  
Then $\ord(r^\bullet) \le \ord(r)$.  
\end{lemma}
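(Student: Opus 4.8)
The plan is to prove the inequality $\ord(r^\bullet) \le \ord(r)$ by structural induction on $r$, but with a strengthened induction hypothesis. Since $r^\bullet$ is defined in terms of the auxiliary operator $r^\circ$ (via the rule $r^{*\bullet} = r^{\bullet\circ *}$), a clean induction cannot treat $\bullet$ in isolation; I would instead prove simultaneously that
\[
\ord(r^\circ) \le \ord(r) \quad\text{and}\quad \ord(r^\bullet) \le \ord(r)
\]
for every $\emptyset$-free expression $r$. First I would fix a convention for $\ord$ on the alternative syntax: an atom $a \in \Sigma$ has length $1$, a union $r+s$ has length $\ord(r)+\ord(s)+1$, a concatenation $rs$ has length $\ord(r)+\ord(s)$ (or $+1$ if concatenation is counted; the argument is insensitive to this, and one must also account for parentheses uniformly), and both $r^*$ and $r^?$ add $1$ to $\ord(r)$. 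The key numerical facts I will use repeatedly are that each unary operator contributes a fixed positive amount, so dropping a star or replacing $r^?$ by $r$ never increases length, and that $r^\circ$ on a concatenation either leaves $rs$ untouched or replaces it by $r^\circ + s^\circ$, which by induction has length $\le \ord(r)+\ord(s)+1$, and in the relevant case (when $\epsilon \in L(rs)$, so both $L(r)$ and $L(s)$ contain $\epsilon$) one expects a bit of slack because $r$ and $s$ are then themselves nonatomic, hence of length $\ge 2$ each.

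The inductive step for $r^\circ$ proceeds by cases on the outermost operator of $r$: atoms are fixed points; for $r+s$ we have $(r+s)^\circ = r^\circ+s^\circ$ and apply the hypothesis to each summand; for $r^{*\circ} = r^\circ$ and $r^{?\circ} = r^\circ$ we drop an operator and apply the hypothesis, gaining at least $1$; for $(rs)^\circ$ we either keep $rs$ (done) or pass to $r^\circ + s^\circ$, where I must check $\ord(r^\circ) + \ord(s^\circ) + 1 \le \ord(rs)$. The inductive step for $r^\bullet$ is parallel: $(r+s)^\bullet = r^\bullet + s^\bullet$ and $(rs)^\bullet = r^\bullet s^\bullet$ are immediate from the hypotheses; $r^{?\bullet}$ is either $r^\bullet$ or $r^{\bullet?}$, each handled by the hypothesis plus the observation that $\ord(r^{\bullet?}) = \ord(r^\bullet)+1 \le \ord(r)+1 = \ord(r^?)$; and the crucial case $r^{*\bullet} = r^{\bullet\circ*}$ requires $\ord(r^{\bullet\circ*}) = \ord((r^\bullet)^\circ) + 1 \le \ord(r^\bullet) + 1 \le \ord(r) + 1 = \ord(r^*)$, which chains the $\circ$-inequality applied to $r^\bullet$ with the $\bullet$-inequality applied to $r$.

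The main obstacle is the concatenation case $(rs)^\circ = r^\circ + s^\circ$ when $\epsilon \in L(rs)$: naively, $\ord(r^\circ) + \ord(s^\circ) + 1$ could exceed $\ord(rs) = \ord(r) + \ord(s)$ by $1$. The resolution is that $\epsilon \in L(rs)$ forces $\epsilon \in L(r)$ and $\epsilon \in L(s)$, and no atom $a\in\Sigma$ has $\epsilon$ in its language, so in the $\emptyset$-free alternative syntax each of $r,s$ must be built with at least one operator — hence $\ord(r) \ge 2$ and $\ord(s) \ge 2$. I would actually fold this into the strengthened hypothesis by also carrying the bookkeeping claim: \emph{if $\epsilon \in L(r)$ then $\ord(r^\circ) \le \ord(r) - 1$} (one extra unit of slack whenever the nullable case can be triggered). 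Granting that refinement, $\ord(r^\circ) + \ord(s^\circ) + 1 \le (\ord(r)-1) + (\ord(s)-1) + 1 = \ord(r)+\ord(s)-1 < \ord(rs)$, closing the gap; and this refined claim is itself easy to verify case-by-case, since every constructor that can produce a nullable language (namely $r^?$, $r^*$, a union with a nullable summand, or a concatenation of nullables) already supplies the needed surplus in the $\circ$-reduction. With the three-part hypothesis in place, all remaining cases are routine length arithmetic, and one concludes $\ord(r^\bullet) \le \ord(r)$ as the desired special case.
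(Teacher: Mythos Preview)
Your proposal is correct and follows essentially the same route as the paper: the key move in both is the strengthened auxiliary claim that $\ord(r^\circ)\le \ord(r)-1$ whenever $\epsilon\in L(r)$ (with $L(r)\neq\{\epsilon\}$), which supplies exactly the one unit of slack needed in the concatenation case $(rs)^\circ=r^\circ+s^\circ$. One small caution: your phrase ``prove simultaneously'' is slightly misleading, since in the $r^{*\bullet}=r^{\bullet\circ*}$ step you apply the $\circ$-inequality to $r^\bullet$, which is not a structural subexpression of $r^*$; the clean organization (implicit in the paper) is to establish the $\circ$-claim for all expressions first and then invoke it as a lemma in the $\bullet$-induction, and you should also note, as the paper does, that replacing $rs$ by $r^\circ+s^\circ$ introduces no new parentheses because union has lower precedence than concatenation.
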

\begin{proof}[Proof Sketch.]
First of all, we may safely assume that~$r$ does 
not contain any subexpressions ruled out by the 
grammar of the previous section, such 
as~$\epsilon+\epsilon$; the transformation into 
strong star normal form subsumes 
these reductions anyway. 

Recall the definition of the auxiliary operator~$^\circ$ 
in the definition of strong star normal form 
(Definition~\ref{defn:senf}).
The proof relies on the following claim:
If $\epsilon \in L(r)$ and $L(r)\neq \{\epsilon\}$, 
then $\ord(r^\circ)\le \ord(r)-1$;
otherwise, $\ord(r^\circ)\le \ord(r)$. This claim can be 
proved by induction while excluding the cases 
$L(r) = \emptyset, \{\epsilon\}$. 
The base cases are easy; the induction step is most 
interesting in the case $r = st$. If $\epsilon \notin L(st)$, 
then $r^\circ = st$ and the claim holds; otherwise 
$r^\circ = s^\circ+t^\circ$ with $\epsilon \in L(s)$ 
and $\epsilon\in L(t)$. We can apply the induction 
hypothesis twice to deduce 
$\ord(s^\circ) + \ord(t^\circ) \le \ord(s) + \ord(t)- 2$,
and thus $\ord(s^\circ + t^\circ) \le \ord(st)-1$, as 
desired. Notice that, as union has lower precedence than
concatenation, this step never introduces new parentheses.
The induction step in the other cases is even easier. 
\end{proof}

Since every regular language is represented by at least 
one shortest regular expression
in strong normal form (with respect to all three 
considered size measures), it suffices to enumerate 
those expressions in normal form. 
Our improved grammar will be based on the following 
simple observation on expressions in strong star normal form: 

\begin{lemma}
If $s^*$ or $s + \epsilon$ appears as a subexpression of an expression in star 
normal form, then $\epsilon \notin L(s)$. \qed
\end{lemma}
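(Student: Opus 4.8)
The plan is to prove the contrapositive in a structural form: I will show, by induction on the structure of a regular expression $t$, that whenever the strong star normal form transformation produces a starred subexpression $s^*$ or a subexpression of the form $s + \epsilon$ inside $t^\bullet$, the inner expression $s$ satisfies $\epsilon \notin L(s)$. The key is that subexpressions of these two forms arise in $t^\bullet$ only through two clauses of Definition~\ref{defn:senf}: a starred subexpression $s^*$ comes from the rule $r^{*\bullet} = r^{\bullet\circ*}$, so $s = r^{\bullet\circ}$ for some subexpression $r$ of $t$; and a subexpression $s + \epsilon$ (equivalently, in the $^?$-syntax, an application of the $^?$ operator) arises from the clause $r^{?\bullet} = r^{\bullet?}$, which fires precisely when $\epsilon \notin L(r)$, so $s = r^\bullet$ with $\epsilon \notin L(r) = L(r^\bullet)$. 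This second case is essentially immediate, so the real content is the starred case.

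For the starred case it therefore suffices to establish the following claim about the auxiliary operator: for every regular expression $r$, we have $\epsilon \notin L(r^{\bullet\circ})$, or more simply, $\epsilon \notin L(r^\circ)$ for every $r$ (together with the fact, proved by an easy induction, that $L(r^\bullet) = L(r)$, so that applying $^\circ$ to $r^\bullet$ is the same as reasoning about $^\circ$ applied to an arbitrary expression with the same language). I would prove $\epsilon \notin L(r^\circ)$ by induction on the structure of $r$. The base case $r = a \in \Sigma$ gives $r^\circ = a$ and $\epsilon \notin L(a)$. For $r = r_1 + r_2$ we have $r^\circ = r_1^\circ + r_2^\circ$, and by the induction hypothesis neither $L(r_1^\circ)$ nor $L(r_2^\circ)$ contains $\epsilon$, hence neither does their union. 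For $r = r_1^?$ or $r = r_1^*$ the rule gives $r^\circ = r_1^\circ$ and we apply the induction hypothesis directly. For $r = r_1 r_2$: if $\epsilon \notin L(r_1 r_2)$ then $r^\circ = r_1 r_2$ and $\epsilon \notin L(r^\circ)$ by hypothesis on $r$; otherwise $r^\circ = r_1^\circ + r_2^\circ$ and we again invoke the induction hypothesis on each summand. In every case $\epsilon \notin L(r^\circ)$, completing the induction.

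Putting the pieces together: any subexpression of $t^\bullet$ of the form $s^*$ has $s = r^{\bullet\circ}$ for some $r$, and since $^\circ$ applied to anything yields a language omitting $\epsilon$, we get $\epsilon \notin L(s)$; any subexpression of the form $s+\epsilon$ has $s = r^\bullet$ with $\epsilon \notin L(r) = L(r^\bullet) = L(s)$. I do not anticipate a serious obstacle here --- the argument is a routine structural induction --- but the one point requiring care is bookkeeping about which syntax is in force: the definition of strong star normal form is stated in the $^?$-enriched syntax, so "$s + \epsilon$ appears as a subexpression" must be read as "an application of $^?$ appears," and one must check that the translation back to the $+\,\epsilon$ syntax does not create new starred or $+\epsilon$ subexpressions with an $\epsilon$-containing argument. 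That check is immediate since the translation $r^? \mapsto r + \epsilon$ is purely local and does not touch the argument $r$.
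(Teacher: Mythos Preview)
Your proposal is correct. The paper itself offers no proof at all --- the lemma is stated with an immediate \qed, treating it as a direct observation from Definition~\ref{defn:senf} --- so your structural induction is considerably more explicit than anything in the paper. The key step you isolate, namely that $\epsilon \notin L(r^\circ)$ for every expression $r$, is precisely the content of the lemma for the starred case, and your case analysis for it is clean; the $^?$ case is, as you note, immediate from the side condition on the clause $r^{?\bullet} = r^{\bullet?}$. The only point you flag as needing care --- that every occurrence of $*$ or $^?$ in $t^\bullet$ is introduced by the corresponding clause of $^\bullet$ (and merely carried along, never created, by $^\circ$ or by the other clauses) --- is indeed correct and follows from inspecting the clauses: $^\circ$ never outputs a top-level $*$ or $^?$, and the only clause of either operator that leaves a subexpression untouched is $(rs)^\circ = rs$, which applies to an expression already of the form $u^\bullet$.
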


To exploit this fact, for each subexpression
we need to keep track of whether it denotes the empty word. 
This can of course 
be done with dynamic programming, by using rules such as 
$\epsilon \in L(rs)$ iff $\epsilon\in L(r)$ and $\epsilon \in L(s)$.
Since in addition every subexpression either denotes the empty word or not,
it is easy to extend the above grammar to incorporate these rules
while retaining the property of being unambiguous.

\begin{table} 
{
\renewcommand{\arraystretch}{1.1}
\renewcommand{\tabcolsep}{2mm}  
\begin{tabular}{|rlrl|}
\hline
\hline
\multicolumn{2}{|r}{$S \to$} & \multicolumn{2}{l|}{$\, S^{+} \mid S^{-} $}\\
\hline
\hline
$S^{+} \to$ & $\, Q^+  \mid A^{+} \mid C^{+} \mid K^+$
&$S^{-} \to $ & $\,A^{-} \mid  T^{-} \mid C^{-}$\\
\hline\hline
$Q^+ \to $ & $\, A^{-}\mathit{{}+\epsilon}  \mid
T^{-}\mathit{{}+\epsilon}\mid C^{-}\mathit{{}+\epsilon}$&&\\
\hline\hline
$A^{+}  \to$ & $\, T^{-} + A^{+}_C  \mid C^{-} + A^{+}_C \mid$
  &\multirow{3}{*}{$A^{-}   \to $} & \multirow{3}{*}{$\, T^{-} + A^{-}_T \mid C^{-} + A^{-}_C $}\\
& $A^{-}+A^{+}_C \mid C^{+} + A^{+}_C \mid$ && \\
& $K^{+} + A^{+}_K$&&\\
&&$A^{-}_T  \to $ & $\, T^{-}  \mid T^{-} + A^{-}_T \mid A_C^{-}$\\
$A^{+}_C \to $ & $\, C^{+} \mid C^{+} + A^{+}_C\mid A^{+}_K$
&$A^{-}_C \to $ & $\, C^{-} \mid C^{-} + A^{-}_C$\\
$A^{+}_K \to$ & $\, K^+ \mid K^+ + A^{+}_K$&&\\
\hline\hline
&&$T^{-}    \to $ & $\, a_1\mid a_2\mid \cdots \mid a_k$\\
\hline\hline
\multirow{2}{*}{$C^{+} \to $} & \multirow{2}{*}{$\, C^{+}_0 C^{+}_0 \mid C^{+}_0 C^{+} $}
&$C^{-} \to $ & $\,   C_0^-\,C_0^- \mid C_0^-\,C_0^+ \mid C_0^+\,C_0^-\mid$\\
&& & $\,  C_0^-\,C^-   \mid C_0^-\,C^+  \mid C_0^+\, C^-$\\
$C^{+}_0 \to$ & $(Q^+) \mid \,(A^{+}) \mid K^+ $
&$C^{-}_0 \to $ & $\, (A^{-}) \mid T^{-}$\\
\hline\hline
$K^+ \to $ & $\, (A^{-})\,* \mid T^{-}\,*\mid (C^{-})\,* $&&\\
\hline
\hline
\end{tabular}
}
\caption{A better unambiguous grammar generating at least one
shortest regular expression (in strong star normal form)
for each regular language.}\label{gram:improved}
\end{table}

Notice that most variables now come in  
an $\epsilon$-flavor (for example, the variable $A^{+}$) 
and in an $\epsilon$-free flavor (for example, the variable $A^{-}$). 
Moreover, the summands inside sums appear in the following order,
which is a refinement of the summand ordering devised previously:
First come all summands which are terminal symbols, then all summands 
which are $\epsilon$-free concatenations, then all concatenations
with~$\epsilon$ in the denoted language, and finally all starred
summands. To illustrate this ordering, we give the most important 
steps of the unique derivation for the 
expression~$a_1 + a_2a_3 + (a_4+\epsilon)(a_5+\epsilon) + a_6^{~*}$:

\begin{align*}
S & \derivestar A^{-} + A^{+}_C \derives 
               T^{-} + A^-_T + A^+_C \derives a_1 + A^-_T +A^+_C \\
& \derives a_1 +  A^-_C +A^+_C \derives a_1 +  C^- + A^+_C 
\derivestar a_1 + a_2a_3 + A^+_C\\
& \derives a_1 + a_2a_3 + C^+ + A^+_C \derivestar  
a_1 + a_2a_3 + (a_4+\epsilon)(a_5+\epsilon) + A^+_C \\
& \derives a_1 + a_2a_3 + (a_4+\epsilon)(a_5+\epsilon) + A^+_K
\derives a_1 + a_2a_3 + (a_4+\epsilon)(a_5+\epsilon) + K^+\\
& \derivestar a_1 + a_2a_3 + (a_4+\epsilon)(a_5+\epsilon) + a_6^{~*}
\end{align*}

The following proposition, giving the correctness of the 
improved grammar, can be proved by induction on the  
minimum required regular expression size. 
Table~\ref{tab:upper} lists the upper bounds obtained
through this grammar.\footnote{The Maple worksheets used to 
derive these bounds can be accessed at the second author's 
personal homepage via 
\url{http://math.stanford.edu/~jlee/automata/}} 

\begin{proposition}
The grammar in Table~\ref{gram:improved} is unambiguous and,
for each regular language, generates at least one regular
expression of minimal ordinary length 
(respectively: reverse polish length, alphabetic width)
representing it. \qed
\end{proposition}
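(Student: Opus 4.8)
The plan is to prove the two claims of the proposition --- unambiguity, and that for each regular language the grammar generates at least one size-minimal expression representing it --- by a simultaneous induction on the minimum required size of a regular expression (with respect to whichever of the three measures we have fixed; the argument is uniform). First I would set up the correspondence between the non-terminals and the structural/semantic invariants they are supposed to capture: the superscript $+$ (respectively $-$) flavor of a variable generates exactly those expressions in strong star normal form of the appropriate syntactic shape whose language does (respectively does not) contain $\epsilon$, with the additional ordering constraint on summands (terminals, then $\epsilon$-free concatenations, then $\epsilon$-containing concatenations, then stars) and the absence of the forbidden subexpressions $s^{**}$, $s^*+\epsilon$, $(s+\epsilon)^*$, $s+\epsilon+\epsilon$ and of the shapes ruled out by strong star normal form (this is where the earlier Lemma --- that if $s^*$ or $s+\epsilon$ occurs in a star-normal-form expression then $\epsilon\notin L(s)$ --- gets used, to justify why $K^+$ has right-hand sides $(A^-)*$, $T^-*$, $(C^-)*$ rather than ever referencing a $+$-flavored body).

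For unambiguity, I would argue that any terminal string in the language of $S$ determines its top-level structure uniquely: whether it is parenthesized or not is visible, and an unparenthesized expression's outermost operator (sum, concatenation, or star) is determined by the usual precedence reading, so exactly one of $A$, $T$, $C$, $K$ (in the right flavor) applies; within a sum the refined summand ordering makes the split into "first summand / rest" unique, and the flavor of each piece is forced by whether $\epsilon$ lies in its language, which is a property of the string, not of the derivation; similarly for concatenation. Then one invokes the structural induction hypothesis on the proper subexpressions. The check that the rules for $\epsilon$-membership are internally consistent --- e.g.\ that $C^+$ only ever produces concatenations at least one of whose factors contributes $\epsilon$ and all of whose factors have $\epsilon$ in their language, matching $\epsilon\in L(rs)\iff\epsilon\in L(r)\wedge\epsilon\in L(s)$, and dually for $C^-$, and $\epsilon\in L(r+s)\iff \epsilon\in L(r)\vee\epsilon\in L(s)$ for the $A$ rules --- is routine but must be done for every production; this is the bookkeeping-heavy part.

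For the completeness claim, given a regular language $M$, take any shortest regular expression $r_0$ for it; by the Lemma on strong star normal form its image $r_0^\bullet$ is in strong star normal form and has size no larger, hence is also shortest. Then normalize further: treat sum and concatenation as variadic, collapse nested occurrences, delete the redundant shapes $s^{**}$ etc.\ (each replacement does not increase size, by the observations in the text preceding Table~\ref{tab:grammar-simple}), and reorder summands into the prescribed order (permuting summands of a sum changes neither the language nor the size). I would then show by induction on structure that the resulting expression is generated by $S$: match its outermost form to the appropriate flavored non-terminal (flavor chosen by $\epsilon$-membership, which the grammar tracks faithfully by the consistency check above), and recurse on the immediate subexpressions, which are again in this fully-normalized form and strictly smaller.

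I expect the main obstacle to be the interlock between the summand-ordering constraint and the flavor distinction in the $A$-rules: the chain of auxiliary variables $A^\pm_T, A^\pm_C, A^\pm_K$ must be verified to accept exactly the tails of sums that respect "terminals, then $\epsilon$-free concatenations, then $\epsilon$-containing concatenations, then stars" while also assigning the correct overall flavor ($A^+$ fires iff at least one summand is $\epsilon$-containing, which happens precisely when the tail is generated starting from a $C^+$ or $K^+$ piece, or when the head is $A^-$'s $T^-$/$C^-$ followed by a $+$-flavored tail), and one must confirm no string is generated twice and none is missed at the boundary between consecutive blocks --- in particular that a sum consisting solely of terminals lands in $A^-$ via $T^-+A^-_T$, a sum of terminals and $\epsilon$-free concatenations only also stays $-$, and a sum acquires the $+$ flavor exactly when an $\epsilon$-containing concatenation or a star appears. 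Establishing this block-by-block correspondence precisely, for both flavors at once, is the delicate case analysis; everything else reduces to the already-cited monotonicity of strong star normal form and to straightforward structural induction.
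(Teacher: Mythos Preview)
Your proposal is correct and follows exactly the approach the paper indicates. The paper gives no detailed proof of this proposition: it simply remarks that the result ``can be proved by induction on the minimum required regular expression size'' and closes the statement with a \qed, so your plan --- induction on minimum size, invoking the monotonicity of strong star normal form, the lemma that $\epsilon\notin L(s)$ whenever $s^*$ or $s+\epsilon$ occurs in a normal-form expression, summand reordering, and the block-by-block case analysis verifying that the flavored non-terminals track $\epsilon$-membership consistently --- is precisely the intended argument fleshed out.
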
 
\begin{table}[htpb]
\setlength{\tabcolsep}{0.125in}
\center
\begin{tabular}{|r|lll|}
\hline
\hline
$k$ & ordinary & reverse polish & alphabetic \\ \hline
$1$ & $O(2.5946^n)$ & $O(2.7422^n)$ & \multirow{6}{*}{
$\left.\begin{aligned}
\\
\\
\\
\\
\end{aligned}\right\}$
$O\left(k^n\cdot 21.5908^n\right)$ } \\
$2$ & $O(4.2877^n)$ & $O(3.9870^n)$ & \\
$3$ & $O(5.4659^n)$ & $O(4.7229^n)$ &  \\
$4$ & $O(6.5918^n)$ & $O(5.3384^n)$ &  \\
$5$ & $O(7.6870^n)$ & $O(5.8780^n)$ & \\
$6$ & $O(8.7624^n)$ & $O(6.3643^n)$ & \\
\hline
\hline
\end{tabular}
\caption{Summary of upper bounds on $R_k(n)$ for $k=1,2,\ldots,6$ and 
various size measures. For ordinary length, we used the simple 
grammar in Table~\ref{tab:grammar-simple}, because the computation  
for the improved grammar ran out of computational resources.
For reverse polish length, we used the simple grammar 
for bootstrapping the bounds.
}
\label{tab:upper}
\end{table}

\begin{table}[htpb]
\setlength{\tabcolsep}{0.125in}
\center
\begin{tabular}{|r|lll|}
\hline
\hline
$k$ & ordinary & reverse polish & alphabetic \\ \hline
$1$ & $O(2.1793^n)$ & $O(2.0795^n)$ & $O(10.9822^n)$\\
$2$ & $O(3.8145^n)$ & $O(3.3494^n)$ & \multirow{5}{*}{
$\left.\begin{aligned}
\\
\\
\\
\\
\end{aligned}\right\}$
$O\left(k^n\cdot 12.2253^n\right)$ } \\
$3$ & $O(4.9019^n)$ & $O(4.0315^n)$ &  \\
$4$ & $O(5.8234^n)$ & $O(4.6121^n)$ &  \\
$5$ & $O(6.8933^n)$ & $O(5.1268^n)$ &\\
$6$ & $O(7.9492^n)$ & $O(5.5939^n)$ &\\
\hline
\hline
\end{tabular}
\caption{Summary of upper bounds for $k=1,2,..,6$ and 
various size measures in the case of finite languages. 
For reverse polish length, we bootstrapped from the values in Table~\ref{tab:upper};
for ordinary length, we bootstrapped the case $k=2$ from the upper 
bound obtained for $k=3$.
}
\label{tab:upper-finite}
\end{table}

\section{Exact enumerations}
Tables~\ref{sf_ord} to \ref{gen_alph} give exact
numbers for the number of regular languages
representable by a regular expression of size~$n$,
but not by any of size less than~$n$.

We explain how these numbers were obtained.\footnote{The C++ source code 
of the software used to compute these numbers 
can be accessed at the second author's personal homepage via 
\url{http://math.stanford.edu/~jlee/automata/}} 
Using the upper bound grammars described previously,
a dynamic programming approach was taken to
produce (in order of increasing regular expression
size) the regular expressions generated by each
non-terminal.
To account for duplicates, each regular expression
was transformed into a DFA, minimized and relabelled
via a breadth-first search to produce a canonical
representation.
Using these representations as hashes, any regular
expression matching a previous one generated by
the same non-terminal was simply ignored.

\begin{table}[htpb]
\begin{minipage}[b]{0.48\textwidth}
\center
\setlength{\tabcolsep}{0.05in}
\begin{tabular}{r|rrrr}
$k$ & 1 & 2 & 3 & 4 \\ \hline
1 & 3 & 4 & 5 & 6 \\
2 & 1 & 4 & 9 & 16 \\
3 & 2 & 11 & 33 & 74 \\
4 & 3 & 28 & 117 & 336 \\
5 & 3 & 63 & 391 & 1474 \\
6 & 5 & 156 & 1350 & 6560 \\
7 & 5 & 358 & 4546 & 28861 \\
8 & 8 & 888 & 15753 & 128720 \\
9 & 9 & 2194 & 55053 & 578033 \\
10 & 14 & 5665 & 196185 & 2624460 \\
\end{tabular}
\caption{\label{sf_ord} Ordinary length, finite languages}
\end{minipage}
\begin{minipage}[b]{0.48\textwidth}
\center
\setlength{\tabcolsep}{0.05in}
\begin{tabular}{r|rrrr}
$k$ & 1 & 2 & 3 & 4 \\ \hline
1 & 3 & 4 & 5 & 6 \\
2 & 2 & 6 & 12 & 20 \\
3 & 3 & 17 & 48 & 102 \\
4 & 4 & 48 & 192 & 520 \\
5 & 5 & 134 & 760 & 2628 \\
6 & 9 & 397 & 3090 & 13482 \\
7 & 12 & 1151 & 12442 & 68747 \\
8 & 17 & 3442 & 51044 & 354500 \\
9 & 25 & 10527 & 211812 & 1840433 \\
10 & 33 & 32731 & 891228 & 
\end{tabular}
\caption{\label{gen_ord} Ordinary length, general case}
\end{minipage}
\end{table}
\begin{table}[htpb]
\begin{minipage}[b]{0.48\textwidth}
\center
\setlength{\tabcolsep}{0.05in}
\begin{tabular}{r|rrrr}
$k$ & 1 & 2 & 3 & 4 \\ \hline
1 & 3 & 4 & 5 & 6 \\
3 & 2 & 7 & 15 & 26 \\
5 & 3 & 25 & 85 & 202 \\
7 & 5 & 109 & 589 & 1917 \\
9 & 9 & 514 & 4512 & 20251 \\
11 & 14 & 2641 & 37477 & 231152 \\
13 & 24 & 14354 & 328718 & 2780936 \\
15 & 41 & 81325 & 2998039 \\ 
17 & 71 & 475936 \\
19 & 118 & 2854145 
\end{tabular}
\caption{\label{sf_rpn} Reverse polish length, finite languages}
\end{minipage}
\begin{minipage}[b]{0.48\textwidth}
\center
\setlength{\tabcolsep}{0.05in}
\begin{tabular}{r|rrrr}
$k$ & 1 & 2 & 3 & 4 \\ \hline
1 & 3 & 4 & 5 & 6 \\
2 & 1 & 2 & 3 & 4 \\
3 & 2 & 7 & 15 & 26 \\
4 & 2 & 13 & 33 & 62 \\
5 & 3 & 32 & 106 & 244 \\
6 & 4 & 90 & 361 & 920 \\
7 & 6 & 189 & 1012 & 3133 \\
8 & 7 & 580 & 3859 & 13529 \\
9 & 11 & 1347 & 11655 & 48388 \\
10 & 15 & 3978 & 43431 & 208634
\end{tabular}
\caption{\label{gen_rpn} Reverse polish length, general case}
\end{minipage}
\vfill
\end{table}
\begin{table}[htpb]
\begin{minipage}[b]{0.48\textwidth}
\center
\setlength{\tabcolsep}{0.05in}
\begin{tabular}{r|rrrr}
$k$ & 1 & 2 & 3 & 4 \\ \hline
0 & 2 & 2 & 2 & 2 \\
1 & 2 & 4 & 6 & 8 \\
2 & 4 & 24 & 60 & 112 \\
3 & 8 & 182 & 806 & 2164 \\
4 & 16 & 1652 & 13182 & 51008 \\
5 & 32 & 16854 & 242070 & 1346924 \\
6 & 64 & 186114 & 4785115 
\end{tabular}
\caption{\label{sf_alph} Alphabetic width, finite languages}
\end{minipage}
\begin{minipage}[b]{0.48\textwidth}
\center
\setlength{\tabcolsep}{0.05in}
\begin{tabular}{r|rrrr}
$k$ & 1 & 2 & 3 & 4 \\ \hline
0 & 2 & 2 & 2 & 2 \\
1 & 3 & 6 & 9 & 12 \\
2 & 6 & 56 & 150 & 288 \\
3 & 14 & 612 & 3232 & 9312 \\
4 & 30 & 7923 & 82614 & 357911 \\
5 & 72 & 114554 & 2332374 \\
6 & 155 & 1768133 \\
\end{tabular}
\caption{\label{gen_alph} Alphabetic width, general case}
\end{minipage}
\vfill
\end{table}

\section{Conclusion and open problems}
In this chapter, we discussed various approaches to enumerating regular expressions 
and the languages they represent, and we used algebraic and analytic tools to compute 
upper and lower bounds for these enumerations. 
Our upper and lower bounds are not always very close, so an obvious open problem (or 
class of open problems) is to improve these bounds.
Other problems we did not examine here involve enumerating interesting subclasses of 
regular expressions.  For example, in linear expressions, every alphabet symbol occurs 
exactly once.  In addition to the intrinsic interest, enumerating subclasses may 
provide a strategy for improving the lower bounds for the general case.

\bibliographystyle{new2}
\bibliography{abbrevs,ciaa,new,enum,re}

\newpage
\abstract{
In this chapter we discuss the problem of enumerating distinct regular
expressions by size and the regular languages they represent.  We discuss 
various notions of the size of a regular expression that appear in the
literature and their advantages and disadvantages.
We consider a formal definition
of regular expressions using a context-free grammar.  

We then show how to enumerate strings generated by an unambiguous
context-free grammar using the Chomsky-Sch\"utzenberger theorem.  This
theorem allows one to construct an algebraic equation whose power
series expansion provides the enumeration.  Classical tools from
complex analysis, such as singularity analysis, can then be used to
determine the asymptotic behavior of the enumeration.

We use these algebraic and analytic methods to obtain asymptotic
estimates on the number of regular expressions of size~$n$. A single
regular language can often be described by several regular expressions,
and we estimate the number of distinct languages denoted by regular
expressions of size~$n$.  We also give asymptotic estimates for these
quantities. For the first few values, we provide exact enumeration
results.
}

\end{document}